\newcommand{\suff}{\textnormal{\texttt{suf}}}
\newcommand{\Exp}{\textnormal{Exp}}
\newcommand{\BA}{\textnormal{BA}}
\newcommand{\BExp}{\textnormal{BExp}}
\newcommand{\At}{\textnormal{At}}
\newcommand{\GS}{\textnormal{GS}}
\newcommand{\GSM}{\textnormal{GS}^{-}}
\newcommand{\false}{\texttt{false}}
\newcommand{\true}{\texttt{true}}
\newcommand{\assert}{\texttt{assert}}
\newcommand{\aand}{\texttt{and}}
\newcommand{\oor}{\texttt{or}}
\newcommand{\nnot}{\texttt{not}}
\newcommand{\ddo}{\texttt{do}}
\newcommand{\iif}{\texttt{if}}
\newcommand{\tthen}{\texttt{then}}
\newcommand{\eelse}{\texttt{else}}
\newcommand{\wwhile}{\texttt{while}}
\newcommand{\row}{row}
\newcommand{\GLStar}{\textnormal{\textsf{GL}}^{*}}
\newcommand{\LStar}{\textnormal{\textsf{L}}^{*}}
\newcommand{\Lsharp}{\textnormal{\textsf{L}}^{\sharp}}
\begin{document}
\begin{frontmatter}
  \title{Guarded Kleene Algebra with Tests: Automata Learning} 
  \author{Stefan Zetzsche\thanksref{UCL}\thanksref{stefanfunding}\thanksref{myemail}}
  \author{Alexandra Silva\thanksref{Cornell}\thanksref{UCL}\thanksref{alexfunding}} 
  \author{Matteo Sammartino\thanksref{Royal}\thanksref{UCL}}
  \address[UCL]{University College London}  	
  \address[Cornell]{Cornell University}
  \address[Royal]{Royal Holloway, University of London}						
  \thanks[stefanfunding]{The author has been supported by GCHQ via the VeTSS grant “Automated black-box verification of networking systems” (4207703/RFA 15845) and by the ERC via the Consolidator Grant AutoProbe 101002697.} 
  \thanks[alexfunding]{The author has been supported by the ERC via the Consolidator Grant AutoProbe 101002697 and by a Royal Society Wolfson Fellowship.}   \thanks[myemail]{Email: \href{mailto:stefanzetzsche@gmail.com} {\texttt{\normalshape
        stefanzetzsche@gmail.com}}} 
\begin{abstract} 
  Guarded Kleene Algebra with Tests (GKAT) is the fragment of Kleene Algebra with Tests (KAT) that arises by replacing the union and iteration operations of KAT with predicate-guarded variants. GKAT is more efficiently decidable than KAT and expressive enough to model simple imperative programs, making it attractive for applications to e.g. network verification. In this paper, we further explore GKAT's automata theory, and present $\GLStar$, an algorithm for learning the GKAT automaton representation of a black-box, by observing its behaviour. A complexity analysis shows that it is more efficient to learn a representation of a GKAT program with $\GLStar$ than with Angluin's existing $\LStar$ algorithm. We implement $\GLStar$ and $\LStar$ in OCaml and compare their performances on example programs.
\end{abstract}
\begin{keyword}
  Automata Learning, Kleene Algebra, Angluin, Coalgebra, Minimization, Moore Automata, Black-box, Model checking, Verification
\end{keyword}
\end{frontmatter}

\section{Introduction}

As hardware and software systems continue to grow in size and complexity, practical and scalable methods for verification tasks become increasingly important. Classical model checking approaches to verification require the existence of a rich model of the system of interest, able to express all its relevant behaviour. In reality such a model however is rarely available, for instance, when the system comes in the form of a black-box with no access to the source code, or the system is simply too complex for manual processing. 

\emph{Automata learning}, or regular inference, aims to automatically infer an automata model by observing the behaviour of the system. The incremental approach has been successfully applied to a wide range of verification tasks from finding bugs in network protocols \cite{de2015protocol}, reverse engineering smartcard reader for internet banking \cite{chalupar2014automated}, and industrial applications \cite{hagerer2002model}. A comprehensive survey of the field can be found in \cite{vaandrager2017model}. The majority of modern learning algorithms is based on Angluin's $\LStar$ algorithm \cite{angluin1987learning}, which learns the unique minimal deterministic finite automaton (DFA) accepting a given regular language, or more generally, the unique minimal Moore automaton accepting a weighted language (\autoref{LStaralgorithm}). In many situations, however, targeting a DFA is not feasible, due to an explosion in the size of the state-space. Such cases instead require types of models specifically tailored for their domain-specific purposes.

For instance, modern networking systems can operate on very large data sets, making them very challenging to model. As a result, controlling, reasoning about, or extending networks can be surprisingly difficult. One approach to modernise the field that has recently gained popularity is \textit{Software Defined Networking} (SDN) \cite{feamster2014road}. Modern SDN programming languages, notably \emph{NetKAT} \cite{anderson2014netkat}, allow operators to model their network and dynamically fine tune forwarding behaviour in response to events such as traffic shifts.
Globally, NetKAT is based on \emph{Kleene Algebra} (KA) \cite{kozen1994completeness}, the sound and complete theory of regular expressions \cite{kleene1951representation}. Locally, it incorporates \emph{Boolean algebra}, the theory of predicates. Both logics have been unified in the well developed theory of \textit{Kleene Algebra with Tests} (KAT) \cite{kozen1997kleene}, which subsumes propositional Hoare logic and can be used to model standard imperative programming constructs. The automata theory for NetKAT has been introduced in \cite{foster2015coalgebraic}. 

Verifying properties about realistic networks reduces in NetKAT to deciding the behavioural equivalence of pairs of automata. Unfortunately, NetKAT's decision procedure is PSPACE-complete, mainly due its foundations in KAT.   
As a consequence, more efficiently decidable fragments of KAT have been considered. In \cite{smolka2019scalable} it was hinted that the \textit{guarded fragment} of KAT is notably more efficiently decidable than the full language, while still remaining sufficiently expressive for networking purposes. 
The idea has been taken further in \cite{smolka2019guarded}, which formally introduced \emph{Guarded Kleene Algebra with Tests} (GKAT), a variation on KAT that arises by replacing the union and iteration operations from KAT with guarded variants. In contrast to KAT, the equational theory of GKAT is decidable in (almost) linear time. These properties make GKAT a promising candidate for the foundations of a SDN programming language that is more efficiently decidable than NetKAT.

\begin{algorithm}[t]
	\begin{algorithmic}
		\State {$S,E \gets \lbrace \varepsilon \rbrace$}
		\Repeat 
				\While{$T = (S, E, \row: S \cup S \cdot A \rightarrow B^E)$ is not closed}	
					\State{find $t \in S \cdot A$ with $\row(t) \neq \row(s)$ for all $s \in S$}
					\State{$S \gets S \cup \lbrace t \rbrace $}
				\EndWhile
				\State {construct and submit $m(T)$ to the teacher}
				\If { the teacher replies \emph{no} with a counterexample $z \in A^*$ }
					\State { $E \gets E \cup \suff(z)$ }
				\EndIf
		\Until {the teacher replies \emph{yes}} \\
		\Return $m(T)$
	\end{algorithmic}
\caption{Angluin's $\LStar$ algorithm for Moore automata with input $A$ and output $B$}
\label{LStaralgorithm}
\end{algorithm}

In view of the potential applications of GKAT to the field of verification, this paper further investigates its automata theory. In detail, the paper makes the following contributions:
\begin{itemize}
	\item For any GKAT automaton, we define a second automaton, which we call its minimization (\autoref{minimdef}). We show that in the class of normal GKAT automata, the minimization of an automaton is the unique size-minimal normal automaton accepting the same language (\autoref{sizeminimal}). We show that the minimization of a normal automaton is isomorphic to the automaton that arises by identifying semantically equivalent pairs among reachable states (\autoref{minimalbisim}), and that the minimizations of two language equivalent normal automata are isomorphic (\autoref{minimalunique}). Finally, we show that minimizing a normal GKAT automaton preserves important invariants such as the nesting coequation (\autoref{minimizationcoequation}). 
	\item We present $\GLStar$, an active-learning algorithm (\autoref{GlStaralgorithm}) that incrementally infers a GKAT automaton from a black-box by querying an \emph{oracle} (\autoref{learningpart}). We show that if the oracle is instantiated with the language accepted by a finite normal GKAT automaton, then the algorithm terminates with its minimization in finite time (\autoref{correctnesstheorem}).
	\item We show that the semantics of GKAT automata \eqref{gkatautomatasemantics} can be reduced to the well-known semantics\footnote{In the language of Coalgebra, the semantics is given by the final coalgebra homomorphism for the functor defined by $FX = X^A \times B$, where $A = \At \cdot \Sigma = \lbrace \alpha \cdot p \mid \alpha \in \At,\ p \in \Sigma \rbrace$ and $B = 2^{\At}$, for finite sets $\Sigma$ and $\At$. The carrier of the final coalgebra for $F$ is $\mathcal{P}((\At \cdot \Sigma)^* \cdot \At)$, the set of \emph{guarded string languages}; the semantics of GKAT automata is given by the subclass of \emph{deterministic} guarded string languages.} of Moore automata. That is, there exists a language preserving embedding of GKAT automata into Moore automata (\autoref{embeddinglanguage}), which maps the minimization of a normal GKAT automaton to the language equivalent minimal Moore automaton (\autoref{minimalembeddingiso}). In consequence, GKAT programs could thus, in principle, be also represented by Moore automata, instead of GKAT automata.
	\item We present a complexity analysis which shows that for GKAT programs it is more efficient to learn a GKAT automaton representation with $\GLStar$ than a Moore automaton representation with Angluin's $\LStar$ algorithm (\autoref{complexity}). We implement $\GLStar$ and $\LStar$ in OCaml and compare their performances on example programs (\autoref{comparisongraph}).
	\end{itemize}

\section{Overview of the approach}

In this section, we give an overview of this paper through examples. We begin by presenting \autoref{LStaralgorithm}, a slight variation of Angluin's $\LStar$ algorithm for finite Moore automata. We exemplify the algorithm by executing it for the language semantics of a simple GKAT program. We then propose a new algorithm, which, instead of a Moore automaton, infers a GKAT automaton.

\subsection{$\LStar$ algorithm}

\begin{figure*}
\centering
\begin{subfigure}[b]{.08\textwidth}
	\centering
		\resizebox{0.9 \textwidth}{!}{
		\begin{tabular}{ c|c}
		 & $\varepsilon$ \\
		 \hline $\varepsilon$ & $0b + 0\overline{b}$   \\
		 \hline
		 \hline $b p$ & $0b + 0\overline{b}$  \\
		 \hline $b q$ & $0b + 0\overline{b}$  \\
		 \hline $\overline{b} p$ & $0b + 0\overline{b}$  \\
		 \hline $\overline{b} q$ & $1b + 1\overline{b}$ 
	\end{tabular}
	}
	\caption{}
		\label{LStarT1}
	\end{subfigure}
	\hfill
	\begin{subfigure}[b]{.1\textwidth}
	\centering
		\resizebox{0.8 \textwidth}{!}{
		\begin{tabular}{ c|c}
		 & $\varepsilon$  \\
		 \hline $\varepsilon$ & $0b + 0\overline{b}$   \\
		 \hline $\overline{b} q$ & $1b + 1\overline{b}$  \\
		 \hline
		 \hline $b p$ & $0b + 0\overline{b}$  \\
		 \hline $b q$ & $0b + 0\overline{b}$  \\
		 \hline $\overline{b} p$ & $0b + 0\overline{b}$  \\
		 \hline $\overline{b} q b p$ & $0b + 0\overline{b}$  \\
		 \hline $\overline{b} q b q$ & $0b + 0\overline{b}$  \\
		 \hline $\overline{b} q \overline{b} p$ & $0b + 0\overline{b}$  \\
		 \hline $\overline{b} q \overline{b} q$ & $0b + 0\overline{b}$  \\		 
	\end{tabular}
	}
	\caption{}
			\label{LStarT2}
	\end{subfigure}
	\hfill
	\begin{subfigure}[b]{.18\textwidth}
	\centering
			\resizebox{\textwidth}{!}{
				\begin{tikzpicture}[node distance=12em]
	\node[state, shape=circle, initial, initial text=, label=below:{$\Rightarrow 0b + 0\overline{b}$}] (x) {$\row(\varepsilon)$};
		\node[state, shape=circle, right of=x, label=above:{$\Rightarrow  1b + 1\overline{b}$}] (y) {$\row(\overline{b}q)$};
	    \path[->]
	(x) edge[above, bend left] node{$\overline{b}q$} (y)
	(x) edge[loop above] node{$bp, bq, \overline{b}p$} (x)
	(y) edge[below, bend left] node{$bp, bq, \overline{b}p, \overline{b}q$} (x)
	;
	\end{tikzpicture}
	}
	\caption{}
			\label{LStarmT2}
\end{subfigure}
	\hfill
	\begin{subfigure}[b]{.2\textwidth}
	\centering
	\resizebox{0.9 \textwidth}{!}{
		\begin{tabular}{ c|c|c|c }
		 & $\varepsilon$ & $\overline{b} q$ & $bq\overline{b}q$   \\
		 \hline $\varepsilon$ & $0b + 0\overline{b}$ & $1b + 1\overline{b}$ & $0b + 0\overline{b}$   \\
		 \hline $\overline{b} q$ & $1b + 1\overline{b}$ & $0b + 0\overline{b}$ & $0b + 0\overline{b}$ \\
		 \hline
		 \hline $b p$ & $0b + 0\overline{b}$ & $1b + 1\overline{b}$ & $0b + 0\overline{b}$  \\
		 \hline $b q$ & $0b + 0\overline{b}$ & $0b + 0\overline{b}$ & $0b + 0\overline{b}$ \\
		 \hline $\overline{b} p$ & $0b + 0\overline{b}$ & $0b + 0\overline{b}$ & $0b + 0\overline{b}$  \\
		 \hline $\overline{b} q b p$ & $0b + 0\overline{b}$ & $0b + 0\overline{b}$ & $0b + 0\overline{b}$ \\
		 \hline $\overline{b} q b q$ & $0b + 0\overline{b}$ & $0b + 0\overline{b}$ & $0b + 0\overline{b}$ \\
		 \hline $\overline{b} q \overline{b} p$ & $0b + 0\overline{b}$ & $0b + 0\overline{b}$ & $0b + 0\overline{b}$  \\
		 \hline $\overline{b} q \overline{b} q$ & $0b + 0\overline{b}$ & $0b + 0\overline{b}$ & $0b + 0\overline{b}$  \\		 
	\end{tabular}
	}
	\caption{}
				\label{LStarT3}
	\end{subfigure}
	\hfill
	\begin{subfigure}[b]{.2\textwidth}
	\centering
	\resizebox{0.9 \textwidth}{!}{
			\begin{tabular}{ c|c|c|c }
		 & $\varepsilon$ & $\overline{b} q$ & $bq\overline{b}q$   \\
		 \hline $\varepsilon$ & $0b + 0\overline{b}$ & $1b + 1\overline{b}$ & $0b + 0\overline{b}$   \\
		 \hline $\overline{b} q$ & $1b + 1\overline{b}$ & $0b + 0\overline{b}$ & $0b + 0\overline{b}$ \\
		 \hline $b q$ & $0b + 0\overline{b}$ & $0b + 0\overline{b}$ & $0b + 0\overline{b}$ \\
		 \hline
		 \hline $b p$ & $0b + 0\overline{b}$ & $1b + 1\overline{b}$ & $0b + 0\overline{b}$  \\
		 \hline $\overline{b} p$ & $0b + 0\overline{b}$ & $0b + 0\overline{b}$ & $0b + 0\overline{b}$  \\
		 \hline $\overline{b} q b p$ & $0b + 0\overline{b}$ & $0b + 0\overline{b}$ & $0b + 0\overline{b}$ \\
		 \hline $\overline{b} q b q$ & $0b + 0\overline{b}$ & $0b + 0\overline{b}$ & $0b + 0\overline{b}$ \\
		 \hline $\overline{b} q \overline{b} p$ & $0b + 0\overline{b}$ & $0b + 0\overline{b}$ & $0b + 0\overline{b}$  \\
		 \hline $\overline{b} q \overline{b} q$ & $0b + 0\overline{b}$ & $0b + 0\overline{b}$ & $0b + 0\overline{b}$  \\	
		 \hline $b q b p$ & $0b + 0\overline{b}$ & $0b + 0\overline{b}$ & $0b + 0\overline{b}$ \\
		 \hline $b q b q$ & $0b + 0\overline{b}$ & $0b + 0\overline{b}$ & $0b + 0\overline{b}$ \\
		 \hline $b q \overline{b} p$ & $0b + 0\overline{b}$ & $0b + 0\overline{b}$ & $0b + 0\overline{b}$  \\
		 \hline $b q \overline{b} q$ & $0b + 0\overline{b}$ & $0b + 0\overline{b}$ & $0b + 0\overline{b}$	 
	\end{tabular}	
	}
	\caption{}
					\label{LStarT4}
	\end{subfigure}
	\begin{subfigure}[b]{.2\textwidth}
	\centering
\label{minimalmoore}
			\resizebox{\textwidth}{!}{
\begin{tikzpicture}[node distance=6em]
	\node[state, shape=circle, initial, initial text=,  label=below:{$\Rightarrow 0b + 0\overline{b}$}] (x) {$\row(\varepsilon)$};
	\node[state, shape=circle, right of=x, above of=x, label=right:{$\Rightarrow  0b + 0\overline{b}$}] (y) {$\row(bq)$};
	\node[state, shape=circle, right of=y, below of=y, label=below:{$\Rightarrow  1b + 1\overline{b}$}] (z) {$\row(\overline{b}q)$};
	    \path[->]
	(x) edge[loop above] node{$b p$} (x)
	(x) edge[below] node{$\overline{b} q$} (z)
	(y) edge[loop above] node{$bp, bq, \overline{b}p, \overline{b}q$} (y)
	(z) edge[right] node{$bp, bq, \overline{b}p, \overline{b}q$}(y)
	(x) edge[right] node{$bq, \overline{b}p$} (y)
	;
\end{tikzpicture}
}
	\caption{}
				\label{LStarmT4}
\end{subfigure}
\caption{An example run of Angluin's $\LStar$ algorithm for the target language $\llbracket (\wwhile\ b\ \ddo\ p); q \rrbracket$.}
\label{lstarexamplerun}
\end{figure*}

Angluin's $\LStar$ algorithm learns the minimal DFA accepting a given regular language \cite{angluin1987learning}. The algorithm has since been modified and generalised for a broad class of transition systems. The variation we present here step-wise infers the minimal Moore automaton accepting a generalised language $L: A^* \rightarrow B$ for a finite input set $A$ and a finite output set $B$ \cite{moore1956gedanken}. The algorithm assumes the existence of a \emph{teacher} (or \emph{oracle}), which can respond to two types of queries:
\begin{itemize}
	\item \textbf{Membership queries}, consisting of a word $w \in A^*$, to which the teacher returns the output $L(w) \in B$; 
	\item \textbf{Equivalence queries}, consisting of a hypothesis Moore automaton $H$, to which the teacher responds \emph{yes}, if $H$ accepts $L$, and \emph{no} otherwise, providing a counterexample $z \in A^*$ in the symmetric difference of $L$ and the language accepted by $H$.
\end{itemize}
The algorithm incrementally builds an \emph{observation table}, which contains partial information about the language $L$ obtained by performing membership queries.
A table consists of two parts: a top part, with rows indexed by a finite set $S \subseteq A^*$; and a bottom-part, with rows ranging over $S \cdot A$. Columns are indexed by a finite set $E \subseteq A^*$. For any $t \in S \cup S \cdot A$ and $e \in E$, the entry at row $t$ and column $e$, denoted by $\row(t)(e)$, is given by the output $L(te) \in B$.
Note that the sets $S$ and $S \cdot A$ can intersect. In such a case, elements in the intersection are only shown in the top part. Formally, we refer to a table as a tuple $T = (S, E, \row)$, leaving the language $L$ implicit. 

	Given a table $T$, one can construct a Moore automaton $m(T) = (X, \delta, \varepsilon, x)$, where $X = \lbrace \row(s) \mid s \in S \rbrace$ is a finite set of states; the transition function $\delta: X \rightarrow X^A$ is given by $\delta(\row(s), a) = \row(sa)$; the output function $\varepsilon: X \rightarrow B$ satisfies $\varepsilon(\row(s)) = \row(s)(\varepsilon)$ (we abuse notation by writing $\varepsilon$ both for the empty string and for the output function); and
		$x = \row(\varepsilon)$ is the initial state.
	For $m(T)$ to be well-defined, the table $T$ has to satisfy $\varepsilon \in S$ and $\varepsilon \in E$, and two properties called closedness and consistency. An observation table is \emph{closed} if for all $t \in S \cdot A$ there exists an $s \in S$ such that $\row(t) = \row(s)$. An observation table is \emph{consistent}, if whenever $s, s' \in S$ satisfy $\row(s) = \row(s')$, then $\row(sa) = \row(s'a)$ for all $a \in A$. A table is consistent in particular if the function $\row$ is injective. 

The algorithm incrementally updates the table to satisfy those properties. If a well-defined hypothesis $m(T)$ can be constructed, the algorithm poses an equivalence query to the teacher, and either terminates, or refines the hypothesis with a counterexample $z \in A^*$. Since we respond to a negative equivalence query by adding the suffixes\footnote{The set $\suff(z)$ of suffixes for $z \in A^*$ is defined by $\suff(\varepsilon) = \lbrace \varepsilon \rbrace$ and $\suff(aw) = \lbrace aw \rbrace \cup \suff(w)$.}  of a counterexample to the set $E$ (opposed to adding the prefixes of a counterexample to the set $S$), rows will always be distinct, rendering consistency trivial\footnote{This variation of $\LStar$ has been introduced by Maler and Pnueli \cite{maler1995learnability}.}. At all times, the set $S$ is prefix-closed and the set $E$ is suffix-closed\footnote{A set $X \subseteq A^*$ is called \emph{suffix-closed}, if $\suff(z) \subseteq X$ for all $z \in X$.}.

\subsubsection{Example of execution}

We now execute Angluin's $\LStar$ (\autoref{LStaralgorithm}) for the target language
\begin{equation}
\label{acceptedlanguageexample}
	L = \llbracket (\wwhile\ b\ \ddo\ p); q \rrbracket =   \lbrace \overline{b}qb, \overline{b}q \overline{b}, b p \overline{b} q b, b p \overline{b} q \overline{b}, ... \rbrace  \subseteq (\At \cdot \Sigma)^* \cdot \At,
	\end{equation}
	where $\At = \lbrace b, \overline{b} \rbrace$ is a finite set of \emph{atoms} and $\Sigma = \lbrace p, q \rbrace$ is a finite set of \emph{actions}. The language $L$ represents the semantics of a program that performs the action $p$  while $b$ is true, and otherwise continues with $q$. It can be viewed as a generalised language $\widehat{L}$ with input $A = (\At \cdot \Sigma)$ and output $B = 2^{\At}$ via currying. We denote functions $f \in B$ as formal sums $\sum_{\alpha \in \At} f(\alpha)\alpha$. Each query to $\widehat{L}$ requires $\vert \At \vert$ many queries to $L$.
	
	Initially, the sets $S$ and $E$ are set to the singleton $\lbrace \varepsilon \rbrace$. We build the observation table in \autoref{LStarT1}. Since the row indexed by $\overline{b}q$ does not appear in the upper part, i.e. differs from the row indexed by $\varepsilon$, the table is not closed.
		To resolve the closedness defect we add $\overline{b}q$ to $S$. The observation table (\autoref{LStarT2}) is now closed. We derive from it the hypothesis depicted in \autoref{LStarmT2}. 
		Next, we pose an equivalence query, to which the oracle replies \emph{no} and informs us that the word $z = bq \overline{b}q$ has been falsely classified. Indeed, given $z$, the language accepted by the hypothesis outputs $1b + 1 \overline{b}$, whereas \eqref{acceptedlanguageexample} produces  $0b + 0\overline{b}$.	
To respond to the counterexample $z$, we add its suffixes to $E$. In this case, there are only the two suffixes $\overline{b} q$ and $bq\overline{b}q$. The next observation table (\autoref{LStarT3}) again is not closed: the row indexed by e.g. $bq$ does not equal any of the two upper rows indexed by $\varepsilon$ and $\overline{b}q$.
	To resolve the closedness defect we add $bq$ to $S$, and obtain the table in \autoref{LStarT4}.
The observation table is now closed. We derive from it the automaton in \autoref{LStarmT4}. Next, we pose an equivalence query, to which the oracle replies \emph{yes}.

\subsection{$\GLStar$ algorithm}

	\begin{algorithm}[t]
	\begin{algorithmic}
		\State {$S \gets \lbrace \varepsilon \rbrace, E \gets \At $}
		\Repeat 
				\While{$T = (S, E, \row: S \cup S \cdot (\At \cdot \Sigma)\rightarrow 2^E)$ is not closed}	
					\State{find $t \in S \cdot (\At \cdot \Sigma)$ with $\row(t)(e) = 1$ for some $e \in E$, but $\row(t) \neq \row(s)$ for all $s \in S$}
					\State{$S \gets S \cup \lbrace t \rbrace $}
				\EndWhile
				\State {construct and submit $m(T)$ to the teacher}
				\If { the teacher replies \emph{no} with a counterexample $z \in (\At \cdot \Sigma)^* \cdot \At$ }
					\State { $E \gets E \cup \suff(z)$ }
				\EndIf
		\Until {the teacher replies \emph{yes}} \\
		\Return $m(T)$
	\end{algorithmic}
\caption{The $\GLStar$ algorithm for GKAT automata}
\label{GlStaralgorithm}
\end{algorithm}

In this section, we propose a new algorithm (\autoref{GlStaralgorithm}) for learning GKAT program representations, which we call $\GLStar$. The new algorithm modifies \autoref{LStaralgorithm} by addressing a number of observations. 

First, we note that the Moore automaton in \autoref{LStarmT4} admits multiple transitions to $\row(bq)$, a \emph{sink-state}, which does not accept any words.
	Second, we observe that languages induced by GKAT programs are \emph{deterministic}\footnote{Deterministic in the sense that, whenever two strings agree on the first $n$ atoms, then they agree on their first $n$ actions (or lack thereof).}. Such languages are naturally represented by GKAT automata, which keep some transitions implicit.
	Third, in some cases\footnote{For instance, the entries of the row indexed by $bq$ in \autoref{LStarT3} must all be zero, since the row indexed by $bp$ admits a non-zero entry.} the deterministic nature of the target language allows us to fill-in parts of the observation table without performing any membership queries.
	Fourth, the cells of the observation table are labelled by functions, each of which requires two membership queries to \eqref{acceptedlanguageexample}; as a consequence, table extensions require an unfeasible amount of queries.

As before, we assume two finite sets, $\At$ and $\Sigma$, and a deterministic language $L \subseteq (\At \cdot \Sigma)^* \cdot \At$. The oracle of $\GLStar$ can answer two types of queries: membership queries consist of a word $w \in (\At \cdot \Sigma)^* \cdot \At$, to which the oracle returns the output $L(w) \in 2$; equivalence queries consist of a hypothesis GKAT automaton $H$, to which the oracle responds \emph{yes}, if $H$ accepts $L$, and \emph{no} otherwise, providing a counterexample $z \in (\At \cdot \Sigma)^* \cdot \At$ in the symmetric difference of $L$ and the language accepted by $H$.

An observation table in $\GLStar$ consists of two parts: a top part, with rows indexed by a finite set $S \subseteq (\At \cdot \Sigma)^*$; and a bottom-part, with rows ranging over $S \cdot \At \cdot \Sigma$. Columns range over a finite set $E \subseteq (\At \cdot \Sigma)^* \cdot \At$. The entry of the observation table at row $t$ and column $e$, denoted by $\row(t)(e)$, is given by $L(te) \in 2$. We refer to a table by $T= (S, E, \row)$ and leave the deterministic language $L$ implicit.

Given an observation table $T$, we construct a GKAT automaton $m(T) = (X, \delta, x)$, where $X = \lbrace \row(s) \mid s \in S \rbrace$ is a finite set of states; $x = \row(\varepsilon)$ is the initial state; and $\delta: X \rightarrow (2 + \Sigma \times X)^{\At}$ is the transition function which evaluates $\delta(\row(s))(\alpha)$ to $(p, \row(s \alpha p))$, if there exists an $e \in E$ with $\row(s \alpha p)(e) = 1$; to $1$, if $ \row(s)(\alpha) = 1$; and to $0$, otherwise.

Most of the properties a table needs to satisfy such that the hypothesis $m(T)$ is well-defined are guaranteed by the construction of \autoref{GlStaralgorithm}, since $L$ is deterministic. We only have to verify that the table is \emph{closed}, that is, for all $t \in S \cdot \At \cdot \Sigma$ with $\row(t)(e) = 1$ for some $e \in E$, there exists some $s \in S$ such that $\row(t) = \row(s)$. As in the case of $\LStar$, the algorithm incrementally updates the table until closedness is guaranteed. It then constructs a well-defined hypothesis, and poses an equivalence query to the teacher. If the oracle replies \emph{yes}, the algorithm terminates, and if the response is \emph{no}, it adds the suffixes\footnote{The set $\suff(z)$ of suffixes for $z \in A^* \cdot B$ is defined by $\suff(w b) = \lbrace vb \mid v \in \suff(w) \rbrace$.}  of a counterexample $z \in (\At \cdot \Sigma)^* \cdot \At$ to $E$.

The differences between $\GLStar$ and $\LStar$ (instantiated for $A = \At \cdot \Sigma$ and $B = 2^{\At}$) are essentially a consequence of currying. In the former case, the set $E$ contains elements of type $(\At \cdot \Sigma)^* \cdot \At$, and the table is filled with booleans in $2$; in the latter case, the set $E$ contains elements of type $(\At \cdot \Sigma)^*$, and the table is filled with functions $\At \rightarrow 2$. This, however, does not mean that $\GLStar$ is merely a shift in perspective: its new types induce independent definitions, and termination needs to be established with novel correctness proofs (\autoref{learningpart}). A thorough comparison with $\LStar$ is given in \autoref{comparisonmoore}.
   \begin{figure*}
\centering
\begin{subfigure}[b]{.13\textwidth}
	\centering
		\resizebox{0.6 \textwidth}{!}{
		\begin{tabular}{ c|c|c}
		 & $b$ & $\overline{b}$ \\
		 \hline $\varepsilon$ & 0 & 0  \\
		 \hline
		 \hline $b p$ & 0 & 0 \\
		 \hline $b q$ & 0 & 0 \\
		 \hline $\overline{b} p$ & 0 & 0 \\
		 \hline $\overline{b} q$ & 1 & 1
	\end{tabular}
	}
	\caption{}
	\label{GLStarT1}
	\end{subfigure}
	\hfill
	\begin{subfigure}[b]{.15\textwidth}
	\centering
		\resizebox{0.6 \textwidth}{!}{
		\begin{tabular}{ c|c|c}
		 & $b$ & $\overline{b}$ \\
		 \hline $\varepsilon$ & 0 & 0  \\
		 \hline $\overline{b} q$ & 1 & 1 \\
		 \hline
		 \hline $b p$ & 0 & 0 \\
		 \hline $b q$ & 0 & 0 \\
		 \hline $\overline{b} p$ & 0 & 0 \\
		 \hline $\overline{b} q b p$ & 0 & 0 \\
		 \hline $\overline{b} q b q$ & 0 & 0 \\
		 \hline $\overline{b} q \overline{b} p$ & 0 & 0 \\
		 \hline $\overline{b} q \overline{b} q$ & 0 & 0 \\		 
	\end{tabular}
	}
	\caption{}
		\label{GLStarT2}
	\end{subfigure}
	\hfill
		\begin{subfigure}[b]{.22\textwidth}
	\centering
	\resizebox{0.8 \textwidth}{!}{
	\begin{tikzpicture}[node distance=7em]
	\node[state, shape=circle, initial, initial text=, label=above:{$\Rightarrow b \mid 0$}] (x) {$\row(\varepsilon)$};
		\node[state, shape=circle, right of=x, label=above:{$\Rightarrow  b, \overline{b} \mid 1$}] (y) {$\row(\overline{b}q)$};
	    \path[->]
	(x) edge[above] node{$\overline{b} \mid q$} (y)
	;
	\end{tikzpicture}
	}
	\caption{}
		\label{GLStarmT2}
	\end{subfigure}
	\hfill
	\begin{subfigure}[b]{.26\textwidth}
	\centering
	\resizebox{0.6 \textwidth}{!}{
		\begin{tabular}{ c|c|c|c|c }
		 & $b$ & $\overline{b}$ & $bp\overline{b}qb$ & $\overline{b}qb$  \\
		 \hline $\varepsilon$ & 0 & 0 & 1 & 1  \\
		 \hline $\overline{b} q$ & 1 & 1 & 0 & 0\\
		 \hline
		 \hline $b p$ & 0 & 0 & 1 & 1 \\
		 \hline $b q$ & 0 & 0 & 0 & 0\\
		 \hline $\overline{b} p$ & 0 & 0 & 0 & 0 \\
		 \hline $\overline{b} q b p$ & 0 & 0 & 0 & 0 \\
		 \hline $\overline{b} q b q$ & 0 & 0 & 0 & 0\\
		 \hline $\overline{b} q \overline{b} p$ & 0 & 0 & 0 & 0 \\
		 \hline $\overline{b} q \overline{b} q$ & 0 & 0 & 0 & 0 \\		 
	\end{tabular}
	}
	\caption{}
		\label{GLStarT3}
	\end{subfigure}
	\hfill
	\begin{subfigure}[b]{.21\textwidth}
	\centering
	\resizebox{0.8 \textwidth}{!}{
	\begin{tikzpicture}[node distance=7em]
	\node[state, shape=circle, initial, initial text=] (x) {$\row(\varepsilon)$};
		\node[state, shape=circle, right of=x, label=above:{$\Rightarrow  b, \overline{b} \mid 1$}] (y) {$\row(\overline{b}q)$};
	    \path[->]
	(x) edge[loop above] node{$b \mid p$} (x)
	(x) edge[above] node{$\overline{b} \mid q$} (y)
	;
	\end{tikzpicture}
	}
	\caption{}
		\label{GLStarmT3}
	\end{subfigure}
\caption{An example run of $\GLStar$ for the target language $\llbracket (\wwhile\ b\ \ddo\ p); q \rrbracket$.}
\label{glstarexamplerun}
\end{figure*}

\subsubsection{Example of execution}

\label{glstarexamplerunsection}

We now execute \autoref{GlStaralgorithm} for the target language \eqref{acceptedlanguageexample}.
Initially, $S = \lbrace \varepsilon \rbrace$ and $E = \At$.	
		We build the observation table in \autoref{GLStarT1}.  Since the bottom row indexed by $\overline{b}q$ contains a non-zero entry and differs from all upper rows (in this case, only the row indexed by $\varepsilon$), the table is not closed. We resolve the closedness defect by adding  $\overline{b}q$ to $S$. The observation table (\autoref{GLStarT2}) is now closed. Note that the row indexed by $\overline{b}q$ indicates that the words $\overline{b}qb$ and $\overline{b}q\overline{b}$ are accepted. Since we know the target language is deterministic, the last four rows of the table can be filled with zeroes, without performing any membership queries. From \autoref{GLStarT2} we derive the hypothesis depicted in \autoref{GLStarmT2}. Next, we pose an
equivalence query, to which the oracle replies $\emph{no}$ and provides us with the counterexample $z = bp\overline{b}qb$, which is in the language \eqref{acceptedlanguageexample}, but not accepted by the hypothesis.	
	We respond to the counterexample by adding its suffixes $bp\overline{b}qb$, $\overline{b}qb$ and $b$ to $E$. The resulting observation table is depicted in \autoref{GLStarT3}.
	The table is closed, since the only non-zero bottom row is the one indexed by $bp$, which coincides with the upper row indexed by $\varepsilon$. Since the row indexed by $bp$ has a non-zero entry, the row indexed by $bq$ can automatically be filled with zeroes.
	We derive from \autoref{GLStarT3} the automaton in \autoref{GLStarmT3}. Finally, we pose an equivalence query, to which the oracle replies \emph{yes}.	

\section{Preliminaries}
This section introduces the syntax and semantics of GKAT, an abstract imperative programming language with uninterpreted actions. For most parts, we follow the relevant bits of the original presentation in \cite{smolka2019guarded}.

\subsection{Syntax}

The syntax of GKAT is inductively built from disjoint non-empty sets of \emph{primitive tests}, $T$, and \emph{actions}, $\Sigma$. In a first step, one generates from $T$ a set of Boolean expressions, $\BExp$. In a second step, the set is extended with $\Sigma$, to the full set of GKAT expressions, $\Exp$:
\begin{align*}
	b,c,d \in \BExp &::= 0 \mid 1 \mid t \in T \mid b \cdot c \mid b + c \mid \overline{b} \\
e, f, g \in \Exp &::= p \in \Sigma \mid b \in \BExp \mid e \cdot f \mid e +_b f \mid e^{(b)}
\end{align*}
By a slight abuse of notation, we will sometimes write $e f$ for $e \cdot f$ and keep parenthesis implicit, e.g. $bc + d$ should be read as $(b \cdot c) + d$.

It is natural to view GKAT expressions as uninterpreted imperative programs. Under this view, one makes the identifications depicted in \autoref{gkatexpressionsasprograms}.

\begin{figure*}
	\begin{gather*}
	0 \equiv \false \qquad 1 \equiv \true \qquad t \equiv t \qquad 
	b \cdot c \equiv b\ \aand\ c \qquad b + c \equiv b\ \oor\ c  \qquad \overline{b} \equiv \nnot\ b \\
	p \equiv \ddo\ p \qquad b \equiv \assert\ b
	 \qquad
	 e \cdot f \equiv e; f \qquad e^{(b)} \equiv \wwhile\ b\ \ddo\ e \qquad e +_b f \equiv \iif\ b\ \tthen\ e\ \eelse\ f
\end{gather*}
\caption{Identifying GKAT expressions with imperative programs.}
\label{gkatexpressionsasprograms}
\end{figure*}

Readers familiar with KAT will notice that the grammar for GKAT is similar to the one of KAT. It differs in that GKAT replaces KAT's union $(+)$ with the guarded union $(+_b)$, and KAT's iteration $(e^*)$ with the guarded iteration $(e^{(b)})$. GKAT's expressions can be encoded within KAT's grammar via the standard embedding that maps a conditional $e +_b f$ to $be + \overline{b}f$, and a while-loop $e^{(b)}$ to $(be)^*\overline{b}$.

\subsection{Semantics: Language Model}

In this section, we introduce the language semantics of GKAT, which assigns to a program the traces it could produce once executed. 
Intuitively, an execution trace is a string of the shape $\alpha_0 p_1 \alpha_1 ... p_n \alpha_n$. It can be thought of as a sequence of states $\alpha_i$ a system is in at point $i$ in time, beginning with $\alpha_0$ and ending in $\alpha_n$, intertwined with actions $p_i$ that transition from the state $\alpha_{i-1}$ to the state $\alpha_{i}$.

More formally, let $\equiv_{\BA}$ denote the equivalence relation between Boolean expressions induced by the Boolean algebra axioms. The quotient $\BExp/_{\equiv_{\BA}}$, that is, the free Boolean algebra on generators $T$, admits a natural preorder $\leq$ defined by $b \leq c \Leftrightarrow b + c \equiv_{\BA} c$. The minimal nonzero elements with respect to this order are called \emph{atoms}, the set of which is denoted by $\At$. If $T = \lbrace t_1, ..., t_n \rbrace$ is finite, an atom $\alpha \in \At$ is of the form $\alpha = c_1 \cdot ... \cdot c_n$ with $c_i \in \lbrace t_i, \overline{t_i} \rbrace$.

A \emph{guarded string} is an element of the set $\GS := \At \cdot (\Sigma \cdot \At)^*$, or equivalently, $(\At \cdot \Sigma)^* \cdot \At$. The set of guarded strings without terminating atom is $\GSM := (\At \cdot \Sigma)^*$. 

A guarded string language $L \subseteq \GS$ is \emph{deterministic} \cite[Def. 2.2]{smolka2019guarded}, if, whenever $\alpha_1 p_1 ... \alpha_{n-1} p_{n-1}\alpha_n v \in L$ and $\alpha_1 q_1 ... \alpha_{n-1}q_{n-1}\alpha_n w \in L$, then $p_i = q_i$ for all $1 \leq i \leq n-1$, and either $v = w = \varepsilon$, or $v = p_n v'$ and $w = q_n w'$ with $p_n = q_n$. The set of deterministic guarded string languages is denoted by $\mathscr{L}$.

Guarded strings can be partially composed via the \emph{fusion product} defined by $v \alpha \diamond \beta w := v \alpha w$, if $\alpha = \beta$, and undefined otherwise. The partial product lifts to a total function on guarded languages by $L \diamond K := \lbrace v \diamond w \mid v \in L, w \in K \rbrace$.
The $n$-th power of a guarded language is inductively defined by $L^0 := \At$ and $L^{n+1} := L^n \diamond L$.
For $B \subseteq \At$ and $\overline{B} := \At \setminus B$, the guarded sum and the guarded iteration of languages are given by
\[
L +_B K := (B \diamond L) \cup (\overline{B} \diamond K) \qquad L^{(B)} := \cup_{n \geq 0} (B \diamond L)^n  \diamond \overline{B}.
\]
The \emph{language model} of GKAT is given by the semantic function $\llbracket - \rrbracket : \Exp \rightarrow \mathscr{P}(\GS)$, which is inductively defined as follows:  
\begin{gather*}
	\llbracket p \rrbracket := \lbrace \alpha p \beta \mid \alpha, \beta \in \At \rbrace \qquad \llbracket b \rrbracket := \lbrace \alpha \in \At \mid \alpha \leq b \rbrace  \\
	\llbracket e \cdot f \rrbracket := \llbracket e \rrbracket \diamond \llbracket f \rrbracket \qquad \llbracket e +_b f \rrbracket := \llbracket e \rrbracket +_{\llbracket b \rrbracket} \llbracket f \rrbracket \qquad
	\llbracket e^{(b)} \rrbracket := \llbracket e \rrbracket^{(\llbracket b \rrbracket)}. 
\end{gather*}
Equivalently, the language semantics of GKAT can be constructed by post-composing the embedding of GKAT expressions into KAT expressions with the semantics of KAT.

The guarded string language $\llbracket e \rrbracket$ accepted by a GKAT program $e$ is deterministic.

\begin{example}
\label{examplewhileloop}
Let the sets of primitive tests and actions be defined by $T := \lbrace b \rbrace$ and $\Sigma := \lbrace p , q\rbrace$, respectively. Then there exist only two atoms, $\At = \lbrace b, \overline{b} \rbrace$. The language model assigns to the program
$ p^{(b)}q \equiv (\wwhile\ b\ \ddo\ p); q$ the guarded deterministic language \eqref{acceptedlanguageexample}.
\end{example}

\subsection{Semantics: Automata Model}

\begin{figure}
\centering
\adjustbox{scale=0.8}{
		\begin{tikzpicture}[node distance=4em]
	\node[state, shape=circle, initial, initial text=] (x) {$x$};
		\node[state, shape=circle, left of=x, below of=x] (y) {$y$};
		\node[state, shape=circle, right of=x, below of=x, label=right:{$\Rightarrow  b, \overline{b} \mid 1$}] (z) {$z$};
	    \path[->]
	 (x) edge[left] node{$b \mid p$} (y)
	 (x) edge[right] node{$\overline{b} \mid q$} (z)
	(y) edge[loop left] node{$b \mid p$} (y)
	(y) edge[above] node{$\overline{b} \mid q$} (z)
	;
	\end{tikzpicture}
	}
	\caption{The Thompson-automaton $\mathscr{X}_{p^{(b)}q}$ for $T = \lbrace b \rbrace$ and $\Sigma = \lbrace p , q\rbrace$.}
			\label{thompsonautomaton}
\end{figure}

In this section, we introduce the automata model of GKAT, the central subject of this paper. As before, we assume two finite sets of tests $T$ and actions $\Sigma$, the former of which induces a finite set of atoms, $\At$.  

Let $G$ be the functor on the category of sets which is defined on objects by $GX = (2 + \Sigma \times X)^{\At}$, where $2 = \lbrace 0, 1 \rbrace$ is the two-element set, and on morphisms in the usual way. 
A \emph{$G$-coalgebra} consists of a pair $\mathscr{X} = (X, \delta)$, where $X$ is a set called \emph{state-space} and $\delta: X \rightarrow GX$ is a function called \emph{transition map}. A $G$-coalgebra \emph{homomorphism} $f: (X, \delta^X) \rightarrow (Y, \delta^Y)$ is a function $f: X \rightarrow Y$  that commutes with the transition maps, $\delta^{Y} \circ f = Gf \circ \delta^X$. A \emph{$G$-automaton} is a $G$-coalgebra $\mathscr{X}$ with a designated initial state $x \in X$. A homomorphism $f: (\mathscr{X}, x) \rightarrow (\mathscr{Y}, y)$ between $G$-automata is a homomorphism between the underlying $G$-coalgebras that maps initial state to initial state, $f(x) = y$. 

For each state $x \in X$, given an input $\alpha \in \At$, a $G$-coalgebra either i) halts and accepts, that is, satisfies $\delta(x)(\alpha) = 1$; ii) halts and rejects, that is, satisfies $\delta(x)(\alpha) = 0$; or iii) produces an output $p$ and moves to a new state $y$, that is, satisfies $\delta(x)(\alpha) = (p, y)$. Intuitively, for each state $x \in X$, a guarded string $\alpha_0p_1\alpha_1...p_n \alpha_n$ is accepted, if the $G$-coalgebra in state $x$ produces the output $p_1...p_n$, halts and accepts. Formally, one defines a function $\llbracket - \rrbracket: X \rightarrow \mathscr{P}(\GS)$ as follows:
		\begin{align}
		\label{gkatautomatasemantics}
		\begin{split}
			\alpha \in \llbracket x \rrbracket :\Leftrightarrow \delta(x)(\alpha) = 1; \qquad
			\alpha p w \in \llbracket x \rrbracket :\Leftrightarrow \exists y \in X: \delta(x)(\alpha) = (p, y) \textnormal{ and } w \in  \llbracket y \rrbracket.
		\end{split}	
		\end{align}
		
								A $G$-coalgebra is \emph{observable}, if the function $\llbracket - \rrbracket$ is injective.
								
	A guarded string $w \in \GS$ is \emph{accepted} by a state $x \in X$, if $w \in \llbracket x \rrbracket$. The language accepted by a $G$-automaton, $\llbracket \mathscr{X} \rrbracket$, is the language accepted by its initial state. 
	Every language accepted by a $G$-automaton satisfies the determinacy property \cite[Thm. 5.8]{smolka2019guarded}.
	 Conversely, one can equip the set of deterministic languages with a $G$-coalgebra structure $(\mathscr{L}, \delta^{\mathscr{L}})$ defined by
\begin{align*}
			\delta^{\mathscr{L}}(L)(\alpha) = \begin{cases}
				(p, (\alpha p)^{-1}L) & \textnormal{ if } (\alpha p)^{-1}L \not = \emptyset \\
				1 & \textnormal{ if } \alpha \in L \\
				0 & \textnormal{ otherwise}
			\end{cases},
		\end{align*}
		where $(\alpha p)^{-1}L = \lbrace w \in \GS \mid \alpha p w \in L \rbrace$.
	Since $\llbracket L \rrbracket = L$ for any $L \in \mathscr{L}$ \cite[Thm. 5.8]{smolka2019guarded}, every deterministic language can thus be recognized by a $G$-automaton with possibly infinitely many states. 
	
	A $G$-coalgebra $(X, \delta)$ is \emph{normal}, if it only transitions to \emph{live} states, that is, $\delta(x)(\alpha) = (p, y)$ implies $\llbracket y \rrbracket \not = \emptyset$, for all $x, y \in X$. For any $G$-automaton $\mathscr{X}$ one can construct a language equivalent normal $G$-automaton $\widehat{\mathscr{X}}$ \cite[Lem. 5.6]{smolka2019guarded}. If $\mathscr{X}$ is normal, the function $\llbracket - \rrbracket: X \rightarrow \mathscr{P}(\GS)$ is the unique coalgebra homomorphism $\llbracket - \rrbracket: (X, \delta) \rightarrow (\mathscr{L}, \delta^{\mathscr{L}})$ \cite[Thm. 5.8]{smolka2019guarded}. 
	
	Two states $x,y \in X$ of a normal coalgebra accept the same language, $\llbracket x \rrbracket = \llbracket y \rrbracket$, if and only if they are \emph{bisimilar}, $x \simeq y$, that is, there exists a binary relation $R \subseteq X \times X$, such that, if $x R y$, then it holds:
	\begin{itemize}
		\item if $\delta(x)(\alpha) \in 2$, then $\delta(y)(\alpha) = \delta(x)(\alpha)$; and
		\item if $\delta(x)(\alpha) = (p, x')$, then $\delta(y)(\alpha) = (p, y')$ and $x' R y'$ for some $y' \in X$.
	\end{itemize}
	Bisimilarity is a symmetric relation and can be extended to two coalgebras by constructing a coalgebra that has the disjoint union of their state-spaces as state-space.

Using a construction that is reminiscent of Thompson's construction for regular expressions \cite{thompson1968programming}, it is possible to efficiently interpret a GKAT expression $e$ as an automaton $\mathscr{X}_e$ that accepts the same language \cite{smolka2019guarded}. Alternatively, one can mirror \cite{smolka2019guarded} Kozen's syntactic form of Brzozowski's derivatives for KAT \cite{kozen2017coalgebraic}. 
\begin{example}
\label{thompsonexample}
	The Thompson-automaton assigned to the expression $p^{(b)}q \equiv (\wwhile\ b\ \ddo\ p); q$ is depicted in \autoref{thompsonautomaton}. It is normal, but not observable, since the states $x$ and $y$ are bisimilar, $x \simeq y$, thus accept the same language, $\llbracket x \rrbracket = \llbracket y \rrbracket$. Moreover, it is language equivalent to the expression by which it is generated, that is, it  satisfies $\llbracket \mathscr{X}_{p^{(b)}q} \rrbracket = \llbracket p^{(b)}q \rrbracket$.
\end{example}

\section{The minimal representation $m(\mathscr{X})$}

The automaton $\mathscr{X}_e$ assigned to an expression $e$ by the Thompson construction is not always the most efficient representation of the language $\llbracket e \rrbracket$. For instance, as seen in \autoref{thompsonexample}, the Thompson-automaton $\mathscr{X}_{p^{(b)}q}$ in \autoref{thompsonautomaton} contains redundant structure, since its states $x$ and $y$ exhibit the same behaviour. In this section, we show that any $G$-automaton $\mathscr{X}$ admits an equivalent \emph{minimal} representation, $m(\mathscr{X})$.
\subsection{Reachability}

We begin by formally defining what it means for a state of a $G$-automaton to be reachable, and show that restricting an automaton to its reachable states preserves important invariants.
 
\begin{definition}
	Let $(X, \delta)$ be a $G$-coalgebra. We write $\rightarrow\ \subseteq X \times \GSM \times X$ for the smallest relation satisfying:
\begin{gather}
\label{transitiondef}
	\frac{}{x \xrightarrow{\varepsilon} x} \quad
	\frac{\delta(x)(\alpha) = (p, y)}{x \xrightarrow{\alpha p} y} \quad
	\frac{x \xrightarrow{\alpha_1 p_1 ... \alpha_{n-1} p_{n-1}} y \quad , \quad y \xrightarrow{\alpha_n p_n} z}{x \xrightarrow{\alpha_1 p_1 ... \alpha_n p_n}z}.
\end{gather}
The states \emph{reachable} from $x \in X$ are 
$r(x) := \lbrace y \in X \mid \exists w \in \GSM: x \xrightarrow{w} y \rbrace$, and their \emph{witnesses} are $R(x) := \lbrace w \in \GSM \mid \exists x_w \in X : x \xrightarrow{w} x_w \rbrace$.
\end{definition}

The following result shows that a state reached by a word is uniquely defined.

\begin{restatable}{lemma}{uniquenessstates}
\label{uniquenessstates}
If $x \xrightarrow{w} x_w^1$ and $x \xrightarrow{w} x_w^2$, then $x_w^1 = x_w^2$.
\end{restatable}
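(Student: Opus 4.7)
The plan is to proceed by strong induction on the length of $w \in \GSM$, measured as the number of atom-action pairs, leveraging that $\delta$ is a function and hence $\delta(z)(\alpha) \in 2 + \Sigma \times X$ is uniquely determined.

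For the base case $w = \varepsilon$, the single-step and transitivity rules of \eqref{transitiondef} cannot apply, since both of their conclusions have length at least one atom-action pair. Hence only the reflexivity rule can derive $x \overset{\varepsilon}{\rightarrow} x_w^i$, forcing $x_w^1 = x = x_w^2$.

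For the inductive step, write $w = w' \alpha p$. The key subclaim is the following \emph{factoring property}: whenever $x \overset{w' \alpha p}{\rightarrow} y$, there exists a state $z$ with $x \overset{w'}{\rightarrow} z$ and $\delta(z)(\alpha) = (p, y)$. This is proved by case analysis on the last rule applied in the derivation. If it is the single-step rule, then $w' = \varepsilon$ and one may take $z = x$ via reflexivity. If it is the transitivity rule, its premises directly supply $z$, along with an atomic transition $z \overset{\alpha p}{\rightarrow} y$ that must, by chasing its own derivation back to a use of the single-step rule (collapsing intermediate applications of rule 3 via the base case), correspond to $\delta(z)(\alpha) = (p, y)$. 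Applying the factoring to both derivations of $x \overset{w}{\rightarrow} x_w^i$ yields states $z_i$ with $x \overset{w'}{\rightarrow} z_i$ and $\delta(z_i)(\alpha) = (p, x_w^i)$. The inductive hypothesis gives $z_1 = z_2$, and since $\delta$ is a function, $(p, x_w^1) = \delta(z_1)(\alpha) = \delta(z_2)(\alpha) = (p, x_w^2)$, whence $x_w^1 = x_w^2$.

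The main obstacle is the factoring subclaim: different derivations of the same judgement may end with different rules (for instance, when $|w| = 1$ the single-step rule and the transitivity rule with empty left premise are both applicable), so one must explicitly reduce each case back to the functionality of $\delta$. Once this is established, the induction itself is routine.
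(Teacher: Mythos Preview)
Your proof is correct and follows essentially the same induction-on-length approach as the paper. The paper's proof simply asserts the factoring you call out as a subclaim (``By \eqref{transitiondef} there exist $x_v^1, x_v^2 \in X$ such that $x \overset{v}{\rightarrow} x_v^i \overset{\alpha p}{\rightarrow} x_w^i$''), whereas you make explicit the case analysis on the last rule applied and the chase through degenerate uses of the transitivity rule; this added rigour is welcome but does not change the underlying argument.
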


It is not hard to see that the subset $r(x)$ of reachable states is $\delta$-invariant, i.e. if $y \in r(x)$ and $\delta(y)(\alpha) = (p, z)$, then $z \in r(x)$. We denote the well-defined sub-automaton one obtains by restricting to the states reachable from an initial state as $r(\mathscr{X})$, and call an automaton \emph{reachable}, if $\mathscr{X} = r(\mathscr{X})$. Following \cite[Def. 15]{van2017calf}, we call a normal, reachable, and observable automaton $\emph{minimal}$.

The set $R(x)$ of words witnessing the reachability of states in $\mathscr{X} = (X, \delta, x)$ can be equipped with a $G$-automaton structure $R(\mathscr{X}) := (R(x), \partial, \varepsilon)$, where $\partial(w)(\alpha) = (p, w\alpha p)$, if $\delta(x_w)(\alpha) = (p, x_{w \alpha p})$ for some $x_{w \alpha p} \in X$, and $\partial(w)(\alpha) = \delta(x_w)(\alpha)$ otherwise. The automaton $r(\mathscr{X})$ can then be recovered as the image of the automata homomorphism $f: R(\mathscr{X}) \rightarrow \mathscr{X}$ defined by $f(w) = x_w$. In other words, there exists an epi-mono factorization $R(\mathscr{X}) \twoheadrightarrow r(\mathscr{X}) \hookrightarrow \mathscr{X}$.

We conclude with a list of important properties preserved by restricting an automaton to its reachable states. \emph{Well-nestedness} and \emph{coequations}, in particular, the \emph{nesting coequation}, have been introduced in \cite{smolka2019guarded} and \cite{schmid2021guarded}, respectively. We refer the reader to the original papers for formal definitions, and to \autoref{relatedwork} for a high-level comparison.

\begin{restatable}{proposition}{reachablesubcoalgebra}
\label{reachablesubcoalgebra}
Let $\mathscr{X}$ be a $G$-automaton, then $r(\mathscr{X})$ is well-nested, normal, or satisfies the nesting coequation, whenever $\mathscr{X}$ does. Moreover, $r(\mathscr{X})$ accepts the same language as $\mathscr{X}$.
\end{restatable}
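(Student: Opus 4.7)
The plan is to observe that $r(\mathscr{X})$ is a sub-coalgebra of $\mathscr{X}$ via the inclusion $i \colon r(\mathscr{X}) \hookrightarrow \mathscr{X}$, and then leverage this structural fact to transport each of the desired properties. First I would verify that $i$ is a $G$-coalgebra homomorphism: by the $\delta$-invariance of $r(x)$ already noted after \autoref{uniquenessstates}, the restricted transition map lands inside $G(r(x))$, so that the relevant square commutes on the nose. With this in place, most of the work becomes routine.

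For the language equivalence, I would induct on the length of a guarded string. The definition \eqref{gkatautomatasemantics} expresses $\llbracket y \rrbracket$ purely in terms of $\delta(y)$ and the semantics of its immediate successors. Since $\delta$ and $\delta^{r}$ agree on $r(x)$, a straightforward induction gives $\llbracket y \rrbracket^{r(\mathscr{X})} = \llbracket y \rrbracket^{\mathscr{X}}$ for every $y \in r(x)$, and specialising to the shared initial state $y = x$ yields $\llbracket r(\mathscr{X}) \rrbracket = \llbracket \mathscr{X} \rrbracket$.

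For normality, suppose $\mathscr{X}$ is normal and let $\delta^{r}(y)(\alpha) = (p, z)$ be a transition in $r(\mathscr{X})$. Then the same equation holds in $\mathscr{X}$, so $\llbracket z \rrbracket^{\mathscr{X}} \neq \emptyset$ by normality of $\mathscr{X}$; combining this with the previous paragraph gives $\llbracket z \rrbracket^{r(\mathscr{X})} \neq \emptyset$, as required. For well-nestedness and the nesting coequation, I would appeal to the definitions in \cite{smolka2019guarded} and \cite{schmid2021guarded}. Both are \emph{local} universal properties quantified over the transitions available at states of the automaton; since every transition of $r(\mathscr{X})$ is, pointwise, a transition of $\mathscr{X}$ between states that happen to lie in $r(x)$, any such property inherited by all states of $\mathscr{X}$ descends automatically to $r(\mathscr{X})$. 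Abstractly, this is the fact that coequations are closed under subcoalgebras.

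The main obstacle is the bookkeeping for the nesting coequation, whose precise formulation from \cite{schmid2021guarded} I would need to recall in order to confirm that the restriction to reachable states does not introduce any new nesting structure; after that, the verification is mechanical. Everything else is a direct exercise in coalgebraic semantics driven by the single observation that the embedding $r(\mathscr{X}) \hookrightarrow \mathscr{X}$ is a homomorphism preserving initial states.
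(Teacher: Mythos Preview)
Your argument for language equivalence and normality is essentially the paper's: both follow from the observation that $r(x)$ is a $\delta$-invariant subset, so semantics and liveness transfer via a straightforward induction on guarded strings. Your handling of the nesting coequation also matches the paper, which simply invokes that covarieties are closed under subcoalgebras and that $r(\mathscr{X}) \hookrightarrow \mathscr{X}$.

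The gap is in your treatment of well-nestedness. You describe it as a ``local universal property quantified over the transitions available at states of the automaton'', but this is not what well-nestedness is. It is an \emph{inductive} definition on how the coalgebra is built: the base case is a coalgebra with no transitions, and the inductive step forms $(\mathscr{X}+\mathscr{Y})[X,h]$ from smaller well-nested pieces via a uniform continuation. Nothing about this is pointwise or local, and it does not follow automatically that a $\delta$-invariant subcoalgebra of a well-nested coalgebra is again well-nested. The paper proves this separately (its Lemma on restrictions of well-nested coalgebras) by structural induction on the well-nested construction: given $\mathscr{X} = (\mathscr{Y}+\mathscr{Z})[Y,h]$ and an invariant $A \subseteq Y+Z$, one must check that $A\cap Y$ and $A\cap Z$ are invariant for $\mathscr{Y}$ and $\mathscr{Z}$ respectively, invoke the induction hypothesis, and then build a new continuation $\overline{h}$ restricted to $A$ so that $\mathscr{X}^A \cong (\mathscr{Y}^{A\cap Y}+\mathscr{Z}^{A\cap Z})[A\cap Y,\overline{h}]$. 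This is not hard, but it is not the one-line ``transitions of $r(\mathscr{X})$ are transitions of $\mathscr{X}$'' argument you sketch; that reasoning simply does not engage with the shape of the definition.
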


\subsection{Minimality}

Recall that the state-space of the minimal DFA for a regular language consists of the equivalence classes of the so-called Myhill-Nerode equivalence relation \cite{nerode1958linear}. 

Similarly, we define the state-space of the minimization of a GKAT automaton $\mathscr{X}$ as the equivalence classes of the equivalence relation $\equiv_{\llbracket \mathscr{X} \rrbracket}$ on $\GSM$ defined for any guarded string language $L \subseteq \GS$ by: 
	\begin{equation}
	\label{equivdef}
		v \equiv_L w :\Leftrightarrow \forall u \in \GS: vu \in L \textnormal{ if(f) } wu \in L.
	\end{equation}
	Let $v^{-1}L = \lbrace u \in \GS \mid v u \in L \rbrace$, then two words $v, w$ are equivalent with respect to $\equiv_L$ if(f) their derivatives $v^{-1}L$ and $w^{-1}L$ coincide.

\begin{definition}
\label{minimdef}
	The \emph{minimization} of a $G$-automaton $\mathscr{X} = (X, \delta, x)$ is $m(\mathscr{X}) := (\lbrace w^{-1}\llbracket \mathscr{X} \rrbracket \mid w \in R(x) \rbrace, \partial, \llbracket \mathscr{X} \rrbracket)$ with 
\begin{align}
\label{minimaltransition}
	\partial(L)(\alpha) := \begin{cases}
		(p, (\alpha p)^{-1}L) & \textnormal{if } (\alpha p)^{-1}L \not= \emptyset \\
		1 & \textnormal{if } \alpha \in L \\
		0 & \textnormal{otherwise}
	\end{cases},
\end{align}
for $L \in \lbrace w^{-1}\llbracket \mathscr{X} \rrbracket \mid w \in R(x) \rbrace$.
\end{definition}

A few remarks on the well-definedness of above definition are in order. The language accepted by a $G$-automaton is deterministic, and taking the derivative of a language preserves its deterministic nature. Thus only one of the three cases in \eqref{minimaltransition} occurs. Since $\varepsilon \in R(x)$ and $\varepsilon^{-1}L = L$, the initial state of the minimization is well-defined. Transitioning to a new state is well-defined since $v^{-1}(w^{-1}L) = (wv)^{-1}L$.

	It is not hard to see that on a high-level the minimization can be recovered as the image of the final automata homomorphism $\llbracket - \rrbracket: R(\mathscr{X}) \rightarrow \mathcal{L}$, which, as one verifies, satisfies $\llbracket w \rrbracket_{R(\mathscr{X})} = w^{-1}\llbracket \mathscr{X} \rrbracket$. In other words, there exists an epi-mono factorization $R(\mathscr{X}) \twoheadrightarrow m(\mathscr{X}) \hookrightarrow \mathcal{L}$.
	
\subsubsection{Properties of $m(\mathscr{X})$}

\label{properitesofminimal}

In this section we prove properties of $m(\mathscr{X})$, which one would expect to hold by a minimization construction. We begin by showing that minimizing a normal automaton results in a reachable acceptor.

\begin{restatable}{lemma}{reachableminimal}
\label{reachableminimal}
Let $\mathscr{X}$ be a normal $G$-automaton with initial state $x \in X$. Then $\llbracket \mathscr{X} \rrbracket \xrightarrow{w} w^{-1}\llbracket \mathscr{X} \rrbracket$ in $m(\mathscr{X})$ for all $w \in R(x)$. In particular, $m(\mathscr{X})$ is reachable. 
	\end{restatable}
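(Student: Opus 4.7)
The plan is to prove the main claim by induction on the length of $w \in R(x)$, and then derive reachability of $m(\mathscr{X})$ as an immediate corollary.

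For the base case $w = \varepsilon$, since $\varepsilon^{-1}\llbracket \mathscr{X} \rrbracket = \llbracket \mathscr{X} \rrbracket$, the empty-word transition $\llbracket \mathscr{X} \rrbracket \overset{\varepsilon}{\rightarrow} \llbracket \mathscr{X} \rrbracket$ holds by the first rule in \eqref{transitiondef}. For the inductive step, suppose the claim is established for $v \in R(x)$, and consider $w = v\alpha p \in R(x)$. By definition of $R(x)$, there exists $x_{v \alpha p} \in X$ with $x \overset{v \alpha p}{\rightarrow} x_{v \alpha p}$, and unpacking \eqref{transitiondef} together with \autoref{uniquenessstates} gives some $x_v \in X$ with $x \overset{v}{\rightarrow} x_v$ and $\delta(x_v)(\alpha) = (p, x_{v\alpha p})$. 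By the induction hypothesis we have $\llbracket \mathscr{X} \rrbracket \overset{v}{\rightarrow} v^{-1}\llbracket \mathscr{X} \rrbracket$ in $m(\mathscr{X})$, so by the third rule in \eqref{transitiondef} it suffices to establish the single-step transition $v^{-1}\llbracket \mathscr{X} \rrbracket \overset{\alpha p}{\rightarrow} (v\alpha p)^{-1}\llbracket \mathscr{X} \rrbracket$.

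Unfolding the definition of $\partial$ in \eqref{minimaltransition}, together with the identity $(\alpha p)^{-1}(v^{-1}\llbracket \mathscr{X} \rrbracket) = (v \alpha p)^{-1}\llbracket \mathscr{X} \rrbracket$, the required transition reduces to showing that the derivative $(v\alpha p)^{-1}\llbracket \mathscr{X} \rrbracket$ is non-empty. This is the key step, and it is exactly where normality enters. The paper has already noted that, for a normal automaton, the semantics is the unique coalgebra homomorphism into $(\mathcal{L}, \delta^{\mathcal{L}})$, which yields $\llbracket x_{v \alpha p} \rrbracket = (v \alpha p)^{-1}\llbracket \mathscr{X} \rrbracket$ (equivalently, the remark that $\llbracket w \rrbracket_{R(\mathscr{X})} = w^{-1}\llbracket \mathscr{X} \rrbracket$ composed with the homomorphism $R(\mathscr{X}) \to \mathscr{X}$). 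Since $\mathscr{X}$ is normal and $\delta(x_v)(\alpha) = (p, x_{v \alpha p})$, the target state $x_{v \alpha p}$ is live, i.e. $\llbracket x_{v \alpha p} \rrbracket \not= \emptyset$, and hence $(v\alpha p)^{-1}\llbracket \mathscr{X} \rrbracket \not= \emptyset$, as required.

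For the second part, reachability of $m(\mathscr{X})$ is immediate: by \autoref{minimdef}, every state of $m(\mathscr{X})$ is of the form $w^{-1}\llbracket \mathscr{X} \rrbracket$ for some $w \in R(x)$, and the first part of the lemma exhibits a witnessing transition from the initial state $\llbracket \mathscr{X} \rrbracket$ to each such state.

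The main obstacle is justifying the non-emptiness condition in the definition of $\partial$; this is precisely where one has to unpack what it means for $w^{-1}\llbracket \mathscr{X} \rrbracket$ to coincide with $\llbracket x_w \rrbracket$, relying on the remarks already stated about the homomorphism $\llbracket - \rrbracket: (X, \delta) \to (\mathscr{L}, \delta^{\mathscr{L}})$ and the factorization $R(\mathscr{X}) \twoheadrightarrow m(\mathscr{X}) \hookrightarrow \mathcal{L}$. Everything else is routine induction and unwinding of definitions.
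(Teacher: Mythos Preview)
Your proof is correct and follows essentially the same approach as the paper: induction on the length of $w \in R(x)$, with the key step being the use of normality to establish $(v\alpha p)^{-1}\llbracket \mathscr{X} \rrbracket \not= \emptyset$ so that the transition rule \eqref{minimaltransition} applies. You are slightly more explicit than the paper in justifying this non-emptiness via the identification $\llbracket x_{v\alpha p} \rrbracket = (v\alpha p)^{-1}\llbracket \mathscr{X} \rrbracket$, whereas the paper simply asserts that normality implies the inequality; both are fine.
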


The next result proves that minimizing an automaton preserves its language semantics.

\begin{restatable}{lemma}{minimallanguage}
\label{minimallanguage}
	Let $\mathscr{X}$ be a $G$-automaton, then $\llbracket L \rrbracket = L$ for all $L$ in $m(\mathscr{X})$. In particular, $\llbracket m(\mathscr{X}) \rrbracket = \llbracket \mathscr{X} \rrbracket$.
\end{restatable}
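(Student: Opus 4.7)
The approach is a direct induction on the length of guarded strings, using the recursive definition \eqref{gkatautomatasemantics} of the semantics and the formula \eqref{minimaltransition} for $\partial$. Before starting the induction, I would record two preliminary facts. First, every state $L$ of $m(\mathscr{X})$ is itself a deterministic guarded language: $L$ is a derivative $w^{-1}\llbracket \mathscr{X}\rrbracket$ of a deterministic language, and a direct check from the definition of determinism shows that derivatives preserve determinism. Second, $\partial$ is well-defined as a function into the state space of $m(\mathscr{X})$: if $L = w^{-1}\llbracket \mathscr{X}\rrbracket$ with $w \in R(x)$ and $(\alpha p)^{-1}L \neq \emptyset$, then some $u$ satisfies $w\alpha p u \in \llbracket \mathscr{X}\rrbracket$, and unfolding \eqref{gkatautomatasemantics} along the reachability path witnessing $x \overset{w}{\rightarrow} x_w$ forces $\delta(x_w)(\alpha) = (p, y)$ for some $y$; hence $w\alpha p \in R(x)$ and $(\alpha p)^{-1}L = (w\alpha p)^{-1}\llbracket \mathscr{X}\rrbracket$ belongs to the state space of $m(\mathscr{X})$.

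With these in place, I would prove $v \in \llbracket L \rrbracket \Leftrightarrow v \in L$ by induction on the length of $v$. For the base case $v = \alpha$, the definition gives $\alpha \in \llbracket L\rrbracket$ iff $\partial(L)(\alpha) = 1$; by \eqref{minimaltransition} this holds exactly when $\alpha \in L$ and $(\alpha p)^{-1}L = \emptyset$ for every $p$. Determinism of $L$ makes the second condition automatic given the first (otherwise $\alpha$ and $\alpha p u \in L$ would agree on the first atom but disagree on the first action or lack thereof), so the base case reduces to $\alpha \in L$. For the inductive step $v = \alpha p w$, the definition yields $\alpha p w \in \llbracket L\rrbracket$ iff there exists $L' \in m(\mathscr{X})$ with $\partial(L)(\alpha) = (p, L')$ and $w \in \llbracket L'\rrbracket$. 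By the well-definedness observation above together with determinism (which guarantees that at most one $p$ can satisfy $(\alpha p)^{-1}L \neq \emptyset$), this forces $L' = (\alpha p)^{-1}L$; applying the inductive hypothesis to $L'$ converts $w \in \llbracket L'\rrbracket$ into $w \in (\alpha p)^{-1}L$, i.e.\ $\alpha p w \in L$. The remaining direction is immediate, since $\alpha p w \in L$ witnesses $(\alpha p)^{-1}L \neq \emptyset$. The \emph{in particular} clause is then a corollary: the initial state of $m(\mathscr{X})$ is $\llbracket \mathscr{X}\rrbracket$ itself, so $\llbracket m(\mathscr{X})\rrbracket = \llbracket \llbracket \mathscr{X}\rrbracket\rrbracket = \llbracket \mathscr{X}\rrbracket$.

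The argument is conceptually short because the transition function $\partial$ is literally the restriction of $\delta^{\mathscr{L}}$ to the reachable derivatives, so $m(\mathscr{X})$ embeds as a subcoalgebra of $(\mathscr{L}, \delta^{\mathscr{L}})$ and inherits the characteristic property of the latter. The main obstacle is therefore not conceptual but a matter of bookkeeping: one must repeatedly invoke determinism to rule out degenerate cases in \eqref{minimaltransition} (that $\alpha \in L$ coexists with a non-empty $(\alpha p)^{-1}L$, or that two distinct actions $p, p'$ both produce non-empty derivatives) and must verify that each derivative accessed during the induction actually lies in $m(\mathscr{X})$'s state space, so that the inductive hypothesis can legitimately be invoked.
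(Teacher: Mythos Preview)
Your proposal is correct and follows essentially the same route as the paper: an induction on the length of $v$ using the definitions \eqref{gkatautomatasemantics} and \eqref{minimaltransition}, with the \emph{in particular} clause obtained by instantiating at the initial state $\varepsilon^{-1}\llbracket \mathscr{X}\rrbracket = \llbracket \mathscr{X}\rrbracket$. The only difference is presentational: the paper parameterises the induction directly by $w \in R(x)$ and defers the determinism and well-definedness bookkeeping you spell out to the remarks following \autoref{minimdef}, whereas you re-derive those facts as explicit preliminaries.
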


An immediate consequence of above statement is that the states of the minimization can be distinguished by their observable behaviour, that is, different states accept different languages. Another implication of \autoref{minimallanguage} is the normality of the minimization: all states are \emph{live}.

\begin{restatable}{corollary}{minimalnormalobservable}
\label{minimalnormalobservable}
		Let $\mathscr{X}$ be a $G$-automaton, then $m(\mathscr{X})$ is normal and observable.
\end{restatable}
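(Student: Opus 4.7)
The plan is to derive both claims as essentially immediate consequences of \autoref{minimallanguage}, which guarantees that every state $L$ of $m(\mathscr{X})$ satisfies $\llbracket L \rrbracket = L$ (where the semantics on the left is the one induced by the transition map $\partial$ of $m(\mathscr{X})$).

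For normality, I would unfold the definition of $\partial$ in \eqref{minimaltransition}. A transition of $m(\mathscr{X})$ of the form $\partial(L)(\alpha) = (p, L')$ arises exclusively in the first case of \eqref{minimaltransition}, so $L' = (\alpha p)^{-1} L$ is by construction nonempty. Applying \autoref{minimallanguage} to the state $L'$ then yields $\llbracket L' \rrbracket = L' \neq \emptyset$, which is precisely the liveness condition required for normality.

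For observability, I would take two states $L_1, L_2$ of $m(\mathscr{X})$ with $\llbracket L_1 \rrbracket = \llbracket L_2 \rrbracket$ and apply \autoref{minimallanguage} to each side:
\[
L_1 \;=\; \llbracket L_1 \rrbracket \;=\; \llbracket L_2 \rrbracket \;=\; L_2.
\]
Hence $\llbracket - \rrbracket$ is injective on the state-space of $m(\mathscr{X})$, which is the definition of observability.

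I do not anticipate any real obstacle, since both properties reduce to bookkeeping once \autoref{minimallanguage} is in hand. The only point worth being careful about is to make sure the semantic brackets in the argument refer to the semantics computed in $m(\mathscr{X})$ itself (as provided by \autoref{minimallanguage}), rather than to the semantics in the original automaton $\mathscr{X}$; but this is exactly how the statement of the preceding lemma is phrased, so no additional work is needed.
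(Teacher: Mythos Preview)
Your proposal is correct and matches the paper's own argument essentially line for line: the paper first proves observability and normality as two separate corollaries, each derived from \autoref{minimallanguage} exactly as you describe, and then combines them. The only cosmetic difference is this intermediate packaging into two auxiliary corollaries.
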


Since $m(\mathscr{X})$ is normal, reachable, and observable, if $\mathscr{X}$ is normal, it is, by our definition, \emph{minimal} (cf. \cite[Def. 15]{van2017calf}). Its size-minimality among normal automata language equivalent to $\mathscr{X}$ can be derived from the abstract definition, cf. \autoref{sizeminimal}.

	\subsubsection{Identifying $m(\mathscr{X})$} 
	\label{identifyingsection}
	
	In this section, we identify the minimization of a normal $G$-automaton with an alternative, but equivalent, construction. In consequence, we are able to derive that the minimization of a normal automaton is size-minimal among language equivalent normal automata and preserves the nesting coequation.
	We begin by observing its universality in the following sense. 
\begin{figure*}
\centering
\begin{subfigure}[b]{0.4\textwidth}
\centering
\adjustbox{scale=0.8,center}{
			\begin{tikzcd}[row sep=small]
		R(\mathscr{X}) \arrow[twoheadrightarrow]{r}{} \arrow[twoheadrightarrow]{dd}{} & r(\mathscr{X}) \arrow[hookrightarrow]{d}{} \arrow[dashed, twoheadrightarrow]{ddl}{\pi} \\
		& \mathscr{X} \arrow{d} \\
		m(\mathscr{X}) \arrow[hookrightarrow]{r} & \mathcal{L}
		\end{tikzcd}	
	}
		\caption{The morphism $\pi$ as unique diagonal.}
				\label{uniquediagonal}
\end{subfigure}
\hfill
\begin{subfigure}[b]{0.55\textwidth}
\centering
\adjustbox{scale=0.8,center}{
			\begin{tikzcd}[row sep=small]
			e \arrow{d} && f \arrow{d}\\
				\mathscr{X}_e \arrow{d} & & \mathscr{X}_e \arrow{d} \\
				\widehat{\mathscr{X}_e} \arrow{d} & & \widehat{\mathscr{X}_f} \arrow{d} \\
				m(\widehat{\mathscr{X}_e}) \arrow{rr}{\cong} & & \arrow{ll} m(\widehat{\mathscr{X}_f})
			\end{tikzcd}
		}
			\caption{$\llbracket e \rrbracket = \llbracket f \rrbracket$ if(f)  $m(\widehat{\mathscr{X}_e})$ and $m(\widehat{\mathscr{X}_f})$ are isomorphic.}
			\label{kozencompletenessdiagram}
\end{subfigure}
\caption{A high-level view of the notions introduced in \autoref{identifyingsection}.}
\end{figure*}
	
\begin{restatable}{proposition}{uniquehomminimal}
\label{uniquehomminimal}
Let $\mathscr{X}$ and $\mathscr{Y}$ be normal $G$-automata with $\llbracket \mathscr{X} \rrbracket =  \llbracket \mathscr{Y} \rrbracket$, and $y \in Y$ the initial state of $\mathscr{Y}$. Then $\pi: r(\mathscr{Y}) \rightarrow m(\mathscr{X})$ with $\pi(z) = w_z^{-1}\llbracket \mathscr{X} \rrbracket$, for $y \xrightarrow{w_z} z$ in $\mathscr{Y}$, is a (surjective) $G$-automata homomorphism, uniquely defined.
\end{restatable}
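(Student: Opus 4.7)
The key preliminary observation is that for a normal $G$-automaton $\mathscr{Y}$ with initial state $y$, if $y \overset{w}{\rightarrow} z$, then $w^{-1} \llbracket y \rrbracket = \llbracket z \rrbracket$. This follows by induction on the length of $w$ from the fact that $\llbracket - \rrbracket : \mathscr{Y} \to \mathcal{L}$ is a $G$-coalgebra homomorphism (which holds because $\mathscr{Y}$ is normal) combined with the definition of transitions in $\mathcal{L}$ via derivatives. Since $\llbracket \mathscr{X} \rrbracket = \llbracket \mathscr{Y} \rrbracket = \llbracket y \rrbracket$, this identifies $\pi(z) = w_z^{-1}\llbracket \mathscr{X} \rrbracket = \llbracket z \rrbracket$, which does not depend on the choice of witness $w_z$ (two such witnesses reach the same $z$ by \autoref{uniquenessstates}). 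Hence $\pi$ is well-defined on states, and the initial-state condition $\pi(y) = \varepsilon^{-1}\llbracket \mathscr{X} \rrbracket = \llbracket \mathscr{X} \rrbracket$ is immediate.

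Next I would verify the homomorphism condition by case analysis on $\delta^{\mathscr{Y}}(z)(\alpha)$, using $\pi(z) = \llbracket z \rrbracket$ throughout. If $\delta^{\mathscr{Y}}(z)(\alpha) \in 2$, then determinism of $\llbracket z \rrbracket$ makes all derivatives $(\alpha p)^{-1}\llbracket z \rrbracket$ empty, so clause \eqref{minimaltransition} outputs $1$ or $0$ in agreement with $\delta^{\mathscr{Y}}(z)(\alpha)$. If $\delta^{\mathscr{Y}}(z)(\alpha) = (p, z')$, then normality of $\mathscr{Y}$ gives $\llbracket z' \rrbracket \neq \emptyset$ and equation \eqref{gkatautomatasemantics} yields $(\alpha p)^{-1}\llbracket z \rrbracket = \llbracket z' \rrbracket$, so the first clause of \eqref{minimaltransition} fires and outputs $(p, \pi(z'))$, as required. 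For surjectivity, every state of $m(\mathscr{X})$ has the form $L = w^{-1}\llbracket \mathscr{X} \rrbracket$ for some $w \in R(x)$; normality of $\mathscr{X}$ forces $L = \llbracket x_w \rrbracket \neq \emptyset$, so some $wv \in \llbracket \mathscr{X} \rrbracket = \llbracket \mathscr{Y} \rrbracket$, and unfolding acceptance in $\mathscr{Y}$ along the atoms and actions of $w$ produces a path $y \overset{w}{\rightarrow} z$ with $\pi(z) = L$.

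For uniqueness, let $f : r(\mathscr{Y}) \to m(\mathscr{X})$ be any $G$-automata homomorphism. Post-composing with the inclusion $\iota : m(\mathscr{X}) \hookrightarrow \mathcal{L}$ yields a $G$-coalgebra homomorphism from $r(\mathscr{Y})$, which is normal by \autoref{reachablesubcoalgebra}, into $\mathcal{L}$. By the uniqueness of homomorphisms into $\mathcal{L}$ for normal coalgebras, $\iota \circ f = \llbracket - \rrbracket_{r(\mathscr{Y})}$, and injectivity of $\iota$ then pins down $f$. The main technical hurdle is the initial identity $w^{-1}\llbracket y \rrbracket = \llbracket z \rrbracket$ for normal $\mathscr{Y}$: everything else flows from it, but one must carefully unfold both the definition of the derivative on $\mathcal{L}$ and the inductive definition \eqref{transitiondef} of $\overset{w}{\rightarrow}$ to justify it cleanly.
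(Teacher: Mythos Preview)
Your proof is correct and takes a more abstract route than the paper's. The paper verifies each property (well-definedness, codomain, surjectivity, homomorphism, uniqueness) by direct calculation with the derivative $w_z^{-1}\llbracket \mathscr{X} \rrbracket$; you instead first identify $\pi(z)$ with $\llbracket z \rrbracket_{\mathscr{Y}}$ and then let everything flow from the fact that $\llbracket - \rrbracket$ is the unique coalgebra homomorphism into $\mathcal{L}$ for normal automata. In particular, your uniqueness argument via finality---post-composing with the inclusion $m(\mathscr{X}) \hookrightarrow \mathcal{L}$ and using injectivity---is cleaner than the paper's, which instead tracks reachability paths explicitly and invokes \autoref{reachableminimal} together with \autoref{uniquenessstates}. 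What the paper's more concrete approach buys is that it makes the codomain check explicit: it verifies directly that $w_z \in R(x)$, so that $\pi(z)$ is actually a state of $m(\mathscr{X})$. Your write-up leaves this implicit; you should note that for $w_z \neq \varepsilon$ normality of $\mathscr{Y}$ forces $\llbracket z \rrbracket \neq \emptyset$, hence $w_z^{-1}\llbracket \mathscr{X} \rrbracket \neq \emptyset$, hence some path $x \overset{w_z}{\rightarrow} x_{w_z}$ exists in $\mathscr{X}$. Finally, your parenthetical appeal to \autoref{uniquenessstates} for well-definedness is misplaced: that lemma says one word reaches at most one state, whereas here you need that two words reaching the \emph{same} state $z$ give the same value of $\pi$---and your real argument for this (that $\pi(z) = \llbracket z \rrbracket$ is intrinsically witness-free) does not use \autoref{uniquenessstates} at all.
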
	

The next result shows that the minimization of a normal $G$-automaton is isomorphic to the automaton that arises by identifying semantically equivalent pairs among reachable states. 
\begin{restatable}{lemma}{minimalbisim}
\label{minimalbisim}
	Let $\mathscr{X}$ be a normal $G$-automaton with initial state $x \in X$ and $\pi: r(\mathscr{X}) \twoheadrightarrow m(\mathscr{X})$ as in \autoref{uniquehomminimal}, then $y \simeq z$ if(f) $\pi(y) = \pi(z)$ for all $y, z \in r(x)$. Consequently, $m(\mathscr{X})$ is isomorphic to $r(\mathscr{X})/\simeq$.
	\end{restatable}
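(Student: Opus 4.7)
The plan is to reduce the biconditional to the known characterization of bisimilarity on normal coalgebras, namely that $y \simeq z$ iff $\llbracket y \rrbracket = \llbracket z \rrbracket$, by showing that $\pi$ is exactly the language map: $\pi(y) = \llbracket y \rrbracket$ for every $y \in r(x)$. Once this identity is in hand, both the biconditional and the quotient description are immediate consequences.

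First, I would prove the identity $\pi(y) = \llbracket y \rrbracket$. Because $\mathscr{X}$ is normal, the language map $\llbracket - \rrbracket \colon (X, \delta) \to (\mathcal{L}, \delta^{\mathcal{L}})$ is a $G$-coalgebra homomorphism (as stated in the preliminaries). A straightforward induction on the length of $w \in \GSM$ using the inductive definition \eqref{transitiondef} shows that any coalgebra homomorphism preserves the reachability relation: if $x \overset{w}{\rightarrow} y$ in $\mathscr{X}$, then $\llbracket x \rrbracket \overset{w}{\rightarrow} \llbracket y \rrbracket$ in $\mathcal{L}$. On the other hand, by \autoref{reachableminimal} we have $\llbracket \mathscr{X} \rrbracket \overset{w}{\rightarrow} w^{-1}\llbracket \mathscr{X} \rrbracket$ in $m(\mathscr{X})$, and because the canonical inclusion $m(\mathscr{X}) \hookrightarrow \mathcal{L}$ is a homomorphism, this transition also holds in $\mathcal{L}$. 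Since $\llbracket \mathscr{X} \rrbracket = \llbracket x \rrbracket$ and reachability in $\mathcal{L}$ determines its target uniquely by \autoref{uniquenessstates}, we obtain $\llbracket y \rrbracket = w^{-1}\llbracket \mathscr{X} \rrbracket$. The right-hand side is precisely $\pi(y)$ by the definition given in \autoref{uniquehomminimal}, so $\pi(y) = \llbracket y \rrbracket$.

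Next, I would invoke the bisimilarity characterization: for states $y, z$ of a normal coalgebra, $y \simeq z$ iff $\llbracket y \rrbracket = \llbracket z \rrbracket$. Combined with the identity above, this yields
\[
y \simeq z \;\Longleftrightarrow\; \llbracket y \rrbracket = \llbracket z \rrbracket \;\Longleftrightarrow\; \pi(y) = \pi(z),
\]
which is the desired biconditional. For the final clause, $\pi \colon r(\mathscr{X}) \twoheadrightarrow m(\mathscr{X})$ is a surjective $G$-automata homomorphism by \autoref{uniquehomminimal}, and we have just shown that its kernel coincides with $\simeq$. Hence $\pi$ factors through the quotient $r(\mathscr{X})/\simeq$ via an injective homomorphism that is still surjective, and is therefore an isomorphism $r(\mathscr{X})/\simeq\ \cong m(\mathscr{X})$.

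The only step requiring genuine care is the identity $\pi(y) = \llbracket y \rrbracket$: it hinges on combining the homomorphism property of $\llbracket - \rrbracket$ (available because $\mathscr{X}$ is normal) with the uniqueness of reached states in the embedded automaton $m(\mathscr{X}) \hookrightarrow \mathcal{L}$. The rest is a purely formal quotient argument.
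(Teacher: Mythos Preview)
Your proposal is correct and follows essentially the same approach as the paper: both hinge on the identity $\pi(y) = \llbracket y \rrbracket$ and then invoke the characterization of bisimilarity on normal coalgebras as language equivalence. The only difference is that you derive $\pi(y) = \llbracket y \rrbracket$ via the homomorphism property of $\llbracket - \rrbracket$, preservation of reachability, \autoref{reachableminimal}, and \autoref{uniquenessstates}, whereas the paper obtains it in one line directly from the definitions: since $x \overset{w_y}{\rightarrow} y$, the semantics \eqref{gkatautomatasemantics} gives $u \in \llbracket y \rrbracket$ iff $w_y u \in \llbracket \mathscr{X} \rrbracket$, which is exactly $u \in w_y^{-1}\llbracket \mathscr{X} \rrbracket$. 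Your route is sound but more elaborate than necessary.
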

	
On a high level, the automata homomorphism $\pi$ can be recovered as the unique (surjective) diagonal making the diagram in \autoref{uniquediagonal} commute.

In \autoref{reachablesubcoalgebra} it was noted that the reachable subautomaton $r(\mathscr{X})$ satisfies the nesting coequation, whenever $\mathscr{X}$ does. By  \autoref{uniquehomminimal} there exists an epimorphism $\pi: r(\mathscr{X}) \twoheadrightarrow m(\mathscr{X})$, if $\mathscr{X}$ is normal. Since coalgebras satisfying a coequation form a covariety, which is closed under homomorphic images \cite{dahlqvist2021write,schmid2021guarded}, we thus can deduce the following result. 

\begin{restatable}{corollary}{minimizationcoequation}
\label{minimizationcoequation}
		Let $\mathscr{X}$ be a normal $G$-automaton, then $m(\mathscr{X})$ satisfies the nesting coequation, whenever $\mathscr{X}$ does.
\end{restatable}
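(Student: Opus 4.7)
The plan is to chain together three facts already established in the paper, so that the nesting coequation propagates along the factorization $\mathscr{X} \leftarrow r(\mathscr{X}) \twoheadrightarrow m(\mathscr{X})$. The strategy is exactly the one hinted at in the paragraph preceding the statement: reduce to $r(\mathscr{X})$, exhibit an epimorphism onto $m(\mathscr{X})$, and invoke closure of covarieties under homomorphic images.

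First, I would apply \autoref{reachablesubcoalgebra}: since $\mathscr{X}$ satisfies the nesting coequation (by hypothesis) and is normal, its reachable subautomaton $r(\mathscr{X})$ is also normal and satisfies the nesting coequation, and moreover $\llbracket r(\mathscr{X}) \rrbracket = \llbracket \mathscr{X} \rrbracket$. Second, I would instantiate \autoref{uniquehomminimal} with the two normal, language-equivalent $G$-automata $r(\mathscr{X})$ and $\mathscr{X}$. This supplies a surjective $G$-automata homomorphism $\pi : r(r(\mathscr{X})) = r(\mathscr{X}) \twoheadrightarrow m(\mathscr{X})$, which is the diagonal depicted in \autoref{uniquediagonal}. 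Third, I would invoke the fact, cited from \cite{dahlqvist2021write,schmid2021guarded}, that the class of $G$-coalgebras satisfying a fixed coequation forms a covariety and is therefore closed under homomorphic images. Applying this to the surjection $\pi$, the nesting coequation descends from $r(\mathscr{X})$ to its image $m(\mathscr{X})$, yielding the claim.

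There is essentially no mathematical obstacle beyond what is already done elsewhere in the paper: the proof is just bookkeeping that all hypotheses of the cited results line up. The only point deserving explicit verification is that the input to \autoref{uniquehomminimal} is valid, namely that both $r(\mathscr{X})$ and $\mathscr{X}$ are normal and accept the same language, both of which are delivered directly by \autoref{reachablesubcoalgebra}. Everything substantive — the construction of the minimization, the existence and surjectivity of $\pi$, and the covariety closure property — has been discharged in the preceding sections, so the corollary reduces to citing these results in the correct order.
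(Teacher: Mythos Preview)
Your proposal is correct and follows essentially the same route as the paper's own proof: pass to $r(\mathscr{X})$ via \autoref{reachablesubcoalgebra}, obtain the surjection $\pi: r(\mathscr{X}) \twoheadrightarrow m(\mathscr{X})$ from \autoref{uniquehomminimal}, and conclude by closure of covarieties under homomorphic images. The only cosmetic difference is that the paper instantiates \autoref{uniquehomminimal} directly with $\mathscr{Y} = \mathscr{X}$ rather than with $\mathscr{Y} = r(\mathscr{X})$, avoiding the (harmless) detour through $r(r(\mathscr{X})) = r(\mathscr{X})$.
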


We continue with the observation that two normal $G$-automata are language equivalent if and only if their minimizations are isomorphic. As depicted in \autoref{kozencompletenessdiagram}, this implies that two expressions $e$ and $f$ are language equivalent if and only if the minimizations of their normalized Thompson automata are isomorphic. A similar idea occurs in Kozen's completeness proof for Kleene Algebra \cite[Theorem 19]{kozen1994completeness}.   
	
\begin{restatable}{corollary}{minimalunique}
\label{minimalunique}
	Let $\mathscr{X}$ and $\mathscr{Y}$ be normal $G$-automata, then $\llbracket \mathscr{X} \rrbracket = \llbracket \mathscr{Y} \rrbracket$ if(f) $m(\mathscr{X}) \cong m(\mathscr{Y})$. 
\end{restatable}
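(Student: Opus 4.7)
The statement splits into two implications and both should fall out of the machinery already assembled in \autoref{properitesofminimal} and \autoref{identifyingsection}.

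The direction $(\Leftarrow)$ is a one-line consequence of \autoref{minimallanguage}: any $G$-automata isomorphism $m(\mathscr{X}) \cong m(\mathscr{Y})$ preserves the language of the initial state, so
\[
\llbracket \mathscr{X} \rrbracket = \llbracket m(\mathscr{X}) \rrbracket = \llbracket m(\mathscr{Y}) \rrbracket = \llbracket \mathscr{Y} \rrbracket.
\]
No normality hypothesis is needed here.

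For the harder direction $(\Rightarrow)$, the key move is to apply \autoref{uniquehomminimal} with $m(\mathscr{Y})$ playing the role of the second automaton. That application is legal because (i) $m(\mathscr{Y})$ is normal by \autoref{minimalnormalobservable}, and (ii) $\llbracket m(\mathscr{Y}) \rrbracket = \llbracket \mathscr{Y} \rrbracket = \llbracket \mathscr{X} \rrbracket$ by \autoref{minimallanguage} together with the hypothesis. Moreover, by \autoref{reachableminimal}, $m(\mathscr{Y})$ is reachable, so $r(m(\mathscr{Y})) = m(\mathscr{Y})$. Thus \autoref{uniquehomminimal} yields a surjective $G$-automata homomorphism $\pi : m(\mathscr{Y}) \to m(\mathscr{X})$.

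It remains to promote $\pi$ to an isomorphism, which reduces to injectivity (a bijective coalgebra homomorphism is automatically an iso, as its set-theoretic inverse commutes with the transition maps). Here I would invoke the fact, stated in the paragraph preceding \autoref{minimalbisim}, that on normal $G$-coalgebras the semantics $\llbracket - \rrbracket$ is the \emph{unique} homomorphism into $(\mathcal{L}, \delta^{\mathcal{L}})$; hence any homomorphism between normal coalgebras preserves the language accepted by each state. Applied to $\pi$, this gives $\llbracket \pi(z) \rrbracket = \llbracket z \rrbracket$ for every state $z$ of $m(\mathscr{Y})$, so if $\pi(z)=\pi(z')$ then $\llbracket z \rrbracket = \llbracket z' \rrbracket$, and observability of $m(\mathscr{Y})$ (again \autoref{minimalnormalobservable}) forces $z = z'$.

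The main obstacle, such as it is, is bookkeeping: one must notice that \autoref{uniquehomminimal} is usable with $m(\mathscr{Y})$ in the role of $\mathscr{Y}$ (which requires \autoref{minimalnormalobservable}, \autoref{reachableminimal}, and \autoref{minimallanguage} at once) and then combine normality with observability of the minimization to turn a surjection into a bijection. No new computation is needed beyond these cited facts.
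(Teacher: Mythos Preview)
Your proof is correct. The $(\Leftarrow)$ direction matches the paper's exactly. For $(\Rightarrow)$, however, you and the paper take different routes to upgrade the map into an isomorphism.

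The paper applies \autoref{uniquehomminimal} \emph{symmetrically}: once with $m(\mathscr{Y})$ in the second slot to get $\pi_1 : m(\mathscr{Y}) \to m(\mathscr{X})$, and once with $m(\mathscr{X})$ in the second slot to get $\pi_2 : m(\mathscr{X}) \to m(\mathscr{Y})$. It then invokes the \emph{uniqueness} clause of \autoref{uniquehomminimal} to conclude $\pi_1\pi_2 = \mathrm{id}$ and $\pi_2\pi_1 = \mathrm{id}$, so the two maps are mutually inverse. This is the standard universal-property argument and never mentions observability.

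You instead apply \autoref{uniquehomminimal} once, obtain a surjection $\pi : m(\mathscr{Y}) \to m(\mathscr{X})$, and then argue injectivity directly: homomorphisms between normal coalgebras preserve state semantics (by finality of $\mathscr{L}$), so $\pi(z)=\pi(z')$ forces $\llbracket z \rrbracket = \llbracket z' \rrbracket$, and observability of $m(\mathscr{Y})$ (\autoref{minimalnormalobservable}) gives $z=z'$. This is equally valid and arguably more concrete, since it pinpoints observability as the reason the surjection cannot collapse states. The paper's version, on the other hand, makes clearer that the isomorphism is canonical (both directions are forced), and avoids the extra appeal to the final-coalgebra property of $\mathscr{L}$. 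One small correction: the fact you cite about $\llbracket - \rrbracket$ being the unique homomorphism into $\mathscr{L}$ for normal coalgebras is stated in the preliminaries, not in the paragraph before \autoref{minimalbisim}.
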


We conclude with the size-minimality of the minimization of a normal automaton among language equivalent normal automata.

\begin{restatable}{corollary}{sizeminimal}
\label{sizeminimal}
Let $\mathscr{X}$ and $\mathscr{Y}$ be normal $G$-automata with $\llbracket \mathscr{X} \rrbracket = \llbracket \mathscr{Y} \rrbracket$. Then $\vert m(\mathscr{X}) \vert \leq \vert \mathscr{Y} \vert$, where $\vert m(\mathscr{X}) \vert = \vert \mathscr{Y} \vert$ if(f) $m(\mathscr{X}) \cong \mathscr{Y}$.
\end{restatable}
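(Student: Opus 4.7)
The plan is to let Proposition~\ref{uniquehomminimal} do essentially all the work. Applied with the roles of $\mathscr{X}$ and $\mathscr{Y}$ as given, it yields a surjective $G$-automata homomorphism $\pi: r(\mathscr{Y}) \twoheadrightarrow m(\mathscr{X})$. Since $r(\mathscr{Y})$ is obtained from $\mathscr{Y}$ by restricting to a subset of states, we have $\vert r(\mathscr{Y}) \vert \leq \vert \mathscr{Y} \vert$, and surjectivity of $\pi$ gives $\vert m(\mathscr{X}) \vert \leq \vert r(\mathscr{Y}) \vert$. Chaining the two inequalities immediately yields $\vert m(\mathscr{X}) \vert \leq \vert \mathscr{Y} \vert$, settling the first claim.

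For the equality case, I would argue as follows. Suppose first that $\vert m(\mathscr{X}) \vert = \vert \mathscr{Y} \vert$. Then the chain $\vert m(\mathscr{X}) \vert \leq \vert r(\mathscr{Y}) \vert \leq \vert \mathscr{Y} \vert$ must collapse. Collapsing the right-hand inequality forces $r(\mathscr{Y}) = \mathscr{Y}$, so $\mathscr{Y}$ is reachable and $\pi$ can be regarded as a homomorphism $\pi: \mathscr{Y} \twoheadrightarrow m(\mathscr{X})$. Collapsing the left-hand inequality means $\pi$ is a surjection between finite sets of equal cardinality, hence a bijection. It then remains to observe that a bijective $G$-automata homomorphism is an isomorphism: the set-theoretic inverse $\pi^{-1}$ automatically commutes with the transition maps (post-composing $\delta^{m(\mathscr{X})} \circ \pi = G\pi \circ \delta^{\mathscr{Y}}$ on both sides with $\pi^{-1}$ and using functoriality of $G$), and by construction maps the initial state of $m(\mathscr{X})$ back to the initial state of $\mathscr{Y}$. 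The converse direction is trivial: isomorphic automata have the same cardinality.

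I do not foresee a serious obstacle; essentially the only subtle point is recognising that $\vert m(\mathscr{X}) \vert = \vert \mathscr{Y} \vert$ silently implies $\mathscr{Y} = r(\mathscr{Y})$, so that the homomorphism $\pi$ supplied by Proposition~\ref{uniquehomminimal} has the whole of $\mathscr{Y}$ as its domain and not merely the reachable part. Once this is in place, the rest is a standard cardinality-forcing argument combined with the routine fact that bijective coalgebra homomorphisms are isomorphisms.
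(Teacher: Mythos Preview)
Your proposal is correct and follows essentially the same route as the paper's proof: both use the surjective homomorphism $\pi: r(\mathscr{Y}) \twoheadrightarrow m(\mathscr{X})$ from Proposition~\ref{uniquehomminimal} to obtain the chain $\vert m(\mathscr{X}) \vert \leq \vert r(\mathscr{Y}) \vert \leq \vert \mathscr{Y} \vert$, and then collapse it in the equality case to conclude that $\mathscr{Y}$ is reachable and $\pi$ is a bijection, hence an isomorphism. The only cosmetic difference is that the paper first passes through Corollary~\ref{minimalunique} to identify $m(\mathscr{X})$ with $m(\mathscr{Y})$ before invoking the diagram in \autoref{uniquediagonal}, whereas you apply Proposition~\ref{uniquehomminimal} directly with the two automata as given; your path is marginally more direct, and your explicit remark that a bijective $G$-coalgebra homomorphism has a homomorphic inverse is a detail the paper leaves implicit.
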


\section{Learning $m(\mathscr{X})$}	 

\label{learningpart}

In this section we formally investigate the correctness of $\GLStar$ (\autoref{GlStaralgorithm}). Our main result is \autoref{correctnesstheorem}, which shows that if the oracle is instantiated with a deterministic language accepted by a finite normal $G$-automaton $\mathscr{X}$, then  $\GLStar$ terminates with a hypothesis isomorphic to $m(\mathscr{X})$.

For calculations, it will be convenient to use the following definition of an observation table. 
One can show that if the oracle is instantiated with a finite normal $G$-automaton, then one has a well-defined observation table at every step.

\begin{definition}
\label{observationtable}
	An \emph{observation table} $T = (S, E, \row)$ consists of subsets $S \subseteq \GSM, E \subseteq \GS$ and a function $\row: S \cup S \cdot (\At \cdot \Sigma) \rightarrow 2^E$, such that:
	\begin{itemize}
		\item $\varepsilon \in S$ and $\At \subseteq E$ 		
		\item $\alpha p e \in E$ implies $e \in E$ (suffix-closed)	
		\item $s \alpha p \in S$ implies $s \in S$ (prefix-closed)
		\item $s \not = t$ implies $\row(s) \not = \row(t)$ for $s,t \in S$
		\item $\varepsilon \not = s \in S$ implies $\row(s)(e) = 1$ for some $e \in E$
		\item $\row(s\alpha p)(e) = \row(s)(\alpha p e)$, if $\alpha p e \in E$
	\end{itemize}
\end{definition}

Not every table induces a well-defined $G$-automaton. To ensure correctness, we have to restrict ourselves to a  subclass of tables that satisfies two important properties.
We call an observation table \emph{deterministic} if the guarded string language $\row(s) \subseteq \GS$ is deterministic for all $s \in S$. 
An observation table is \emph{closed}, if for all $t \in  S \cdot (\At \cdot \Sigma)$ with $\row(t)(e) = 1$ for some $e \in E$, there exists an $s \in S$ such that $\row(s) = \row(t)$.

\begin{definition}
\label{hypothesis}
Given a closed deterministic observation table $T = (S, E, \row)$, let $m(T) := (\lbrace \row(s) \mid s \in S \rbrace, \delta, \row(\varepsilon))$ be the $G$-automaton with
	\begin{align}
	\label{hypothesisdef}
	\delta(L)(\alpha) = \begin{cases}
		(p, (\alpha p)^{-1}L) & \textnormal{if } (\alpha p)^{-1}L \not = \emptyset \\
		1 & \textnormal{if } \alpha \in L \\
		0 & \textnormal{otherwise}
	\end{cases},
\end{align}
where $L \in \lbrace \row(s) \mid s \in S \rbrace$ and $(\alpha p)^{-1}\row(s) = \row(s \alpha p)$.
\end{definition}

A few remarks on the well-definedness of above definition are in order.
	By \autoref{observationtable} the upper-rows of an observation table are disjoint. Since $T$ is deterministic, precisely one of the three cases in \eqref{hypothesisdef} occurs.
	  If $(\alpha p)^{-1}\row(s)$ is non-empty, there exists, because $T$ is closed, some $t \in S$ with $(\alpha p)^{-1}\row(s) = \row(t)$. This shows that $m(T)$ is closed under transitions.

\subsection{Properties of $m(T)$}

In what follows, let $T$ be a closed deterministic observation table, unless states otherwise. We will establish a few basic properties of $m(T)$. 
First, we observe its reachability, which is implied by a stronger statement. 

\begin{restatable}{lemma}{reachability} 
\label{reachability}
It holds $\row(s) \xrightarrow{t} \row(st)$ in $m(T)$ for all $s \in S$ and $t \in \GSM$, such that $st \in S$. In particular, $m(T)$ is reachable.
\end{restatable}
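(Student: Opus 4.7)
The plan is to establish the first claim by induction on the length of $t \in \GSM = (\At \cdot \Sigma)^*$, and then derive reachability of $m(T)$ as an immediate corollary by instantiating $s = \varepsilon$.

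For the base case $t = \varepsilon$, we have $st = s$ and the reflexive rule in \eqref{transitiondef} directly gives $\row(s) \overset{\varepsilon}{\rightarrow} \row(s)$. For the inductive step, I write $t = t' \alpha p$. Since $S$ is prefix-closed by \autoref{observationtable} and $st = st'\alpha p \in S$, we deduce $st' \in S$, so the induction hypothesis yields $\row(s) \overset{t'}{\rightarrow} \row(st')$ in $m(T)$. It then suffices to justify the single-step transition $\row(st') \overset{\alpha p}{\rightarrow} \row(st)$ and glue the two transitions together via the transitivity rule in \eqref{transitiondef}.

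For the single-step transition, we must verify that $\delta(\row(st'))(\alpha) = (p, \row(st))$ holds in the hypothesis automaton. Using the identity $(\alpha p)^{-1} \row(st') = \row(st'\alpha p) = \row(st)$ built into \autoref{hypothesis}, and the case analysis in \eqref{hypothesisdef}, this reduces to showing that $\row(st) \subseteq \GS$ is non-empty as a guarded-string language, i.e.\ that $\row(st)(e) = 1$ for some $e \in E$. Here I invoke the fifth clause of \autoref{observationtable}: since $st \neq \varepsilon$ (because $t = t'\alpha p$ is non-empty) and $st \in S$, the clause supplies such a witness $e$. Determinism of $T$ then ensures the first case of \eqref{hypothesisdef} is the uniquely applicable one, and the desired transition holds.

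Finally, the reachability of $m(T)$ follows by instantiating $s = \varepsilon$ and letting $t$ range over $S$: each state $\row(t)$ of $m(T)$ is reached from the initial state $\row(\varepsilon)$ via the word $t \in \GSM$. I do not foresee any substantial obstacles; the one point requiring care is ensuring that the first clause of the transition function is the applicable one in the step case, which is precisely what the ``$\varepsilon \neq s \in S$ implies $\row(s)$ has a $1$-entry'' condition from \autoref{observationtable} was designed to guarantee.
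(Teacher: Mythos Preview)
Your proposal is correct and follows essentially the same argument as the paper: induction on the length of $t$, with the base case handled by reflexivity, the inductive step via prefix-closure of $S$ together with the fifth clause of \autoref{observationtable} to ensure $\row(st) \neq \emptyset$, and reachability obtained by specialising to $s = \varepsilon$.
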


We call a $G$-automaton $(\mathscr{Y}, y)$ \emph{consistent with $T$}, if $S \subseteq R(y)$ and  $\llbracket y_s \rrbracket(e) = \row(s)(e)$ for all $s \in S$, $e \in E$, and $y_s \in Y$ with $y \xrightarrow{s} y_s$.
By \autoref{reachability}, the automaton $m(T)$ is consistent with $T$ if and only if $\llbracket \row(s) \rrbracket(e) = \row(s)(e)$ for all $s \in S$ and $e \in E$. The consistency of $m(T)$ with $T$ should not be confused with the consistency of $T$ itself. Both terminologies appear frequently in the literature \cite{angluin1987learning}. We show that $m(T)$ is not only consistent with $T$, but has in fact the fewest number of states among all automata consistent with $T$.

\begin{restatable}{lemma}{consistent}
\label{consistent}
	$m(T)$ is size-minimal among automata consistent with $T$.
\end{restatable}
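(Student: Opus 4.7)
The proof will construct an injection from the state-space of $m(T)$ into that of an arbitrary automaton $(\mathscr{Y}, y)$ consistent with $T$.

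First I would exploit \autoref{uniquenessstates} to note that, since $S \subseteq R(y)$, each $s \in S$ determines a unique state $y_s \in Y$ with $y \overset{s}{\rightarrow} y_s$. Using the injectivity of $\row$ on $S$ (which is one of the defining properties of an observation table in \autoref{observationtable}), the assignment $\row(s) \mapsto y_s$ is a well-defined function $\psi : \lbrace \row(s) \mid s \in S \rbrace \rightarrow Y$: if $\row(s) = \row(s')$ with $s, s' \in S$, then $s = s'$, hence $y_s = y_{s'}$.

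The core step is to show that $\psi$ is injective. Suppose $\psi(\row(s)) = \psi(\row(t))$, i.e. $y_s = y_t$. Then $\llbracket y_s \rrbracket = \llbracket y_t \rrbracket$ as languages, so for every $e \in E$ we have $\llbracket y_s \rrbracket(e) = \llbracket y_t \rrbracket(e)$. By the defining property of consistency of $\mathscr{Y}$ with $T$, this forces $\row(s)(e) = \row(t)(e)$ for all $e \in E$, i.e.\ $\row(s) = \row(t)$. Hence $\psi$ is injective.

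From the existence of an injection $\lbrace \row(s) \mid s \in S \rbrace \hookrightarrow Y$ one concludes $\vert m(T) \vert \leq \vert \mathscr{Y} \vert$. There is no real obstacle here; the argument is essentially bookkeeping about the three clauses of consistency (existence of $y_s$, its uniqueness, and matching of row-entries), combined with the fact that rows in the upper part of the table are pairwise distinct. The subtle point to flag is that one must not confuse this with the observation that $\psi$ also respects the $G$-automaton structure — size-minimality only needs injectivity of the underlying map on state-spaces, so the argument stays purely set-theoretic.
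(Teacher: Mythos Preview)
Your injection argument for size-minimality is correct and matches the paper's second half almost verbatim: define $f(\row(s)) = y_s$, use disjointness of upper rows for well-definedness, and use consistency of $\mathscr{Y}$ to conclude $\row(s) = \row(t)$ from $y_s = y_t$.

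However, you have omitted the substantive first half of the paper's proof. Being ``size-minimal among automata consistent with $T$'' presupposes that $m(T)$ is itself consistent with $T$, i.e.\ that $\llbracket \row(s) \rrbracket(e) = \row(s)(e)$ for all $s \in S$ and $e \in E$. The paper establishes this by induction on the length of $e$: the base case $e = \alpha \in \At$ is immediate from the definition of $\delta$, while the step $e = \alpha p w$ uses suffix-closure of $E$, the identity $\row(s\alpha p)(w) = \row(s)(\alpha p w)$ from \autoref{observationtable}, and closedness of $T$ to transfer the induction hypothesis along the transition. This is not mere bookkeeping: it is precisely the place where closedness of the table is exploited, and it is the property that later proofs (\autoref{normalityandobservable}, \autoref{ifclosedthen}, \autoref{optimizedcounterexample}) actually invoke when they cite this lemma. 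Without it, your injection shows only that $m(T)$ is no larger than any consistent automaton, not that $m(T)$ belongs to the class over which it is claimed minimal.
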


From the consistency of $m(T)$ with $T$ it is straightforward to derive its normality and observability.

\begin{restatable}{lemma}{normalityandobservable}
\label{normalityandobservable}
	$m(T)$ is normal and observable.
\end{restatable}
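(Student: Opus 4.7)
The plan is to derive both properties as corollaries of the fact that $m(T)$ is consistent with $T$, which is already part of \autoref{consistent}. Unpacking consistency for $m(T)$ via \autoref{reachability} (which yields $x_s = \row(s)$ for the state reached from $\row(\varepsilon)$ by $s$), consistency amounts to the identity $\llbracket \row(s) \rrbracket(e) = \row(s)(e)$ for every $s \in S$ and $e \in E$. Both desired properties will then follow by transporting distinctness/non-emptiness from the table rows (as elements of $2^E$) to the accepted languages (as subsets of $\GS$).

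For observability, I would argue as follows. Let $\row(s), \row(t)$ be two distinct states of $m(T)$, with $s, t \in S$. By \autoref{observationtable}, $\row(s) \not= \row(t)$ as functions $E \rightarrow 2$, so there exists $e \in E$ with $\row(s)(e) \not= \row(t)(e)$. Using consistency of $m(T)$ with $T$, we obtain $\llbracket \row(s) \rrbracket(e) = \row(s)(e) \not= \row(t)(e) = \llbracket \row(t) \rrbracket(e)$, hence $\llbracket \row(s) \rrbracket \not= \llbracket \row(t) \rrbracket$. This establishes injectivity of $\llbracket - \rrbracket$.

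For normality, suppose $\delta(\row(s))(\alpha) = (p, \row(s'))$ is a transition in $m(T)$. By the construction in \autoref{hypothesis}, this occurs exactly when $(\alpha p)^{-1}\row(s) = \row(s\alpha p) \not = \emptyset$, and closedness of $T$ then selects some $s' \in S$ with $\row(s') = \row(s\alpha p)$. Non-emptiness of $\row(s')$ (viewed as a subset of $E$) provides $e \in E$ with $\row(s')(e) = 1$, and consistency then gives $\llbracket \row(s') \rrbracket(e) = 1$, so $e \in \llbracket \row(s') \rrbracket$ and in particular $\llbracket \row(s') \rrbracket \not = \emptyset$. Thus every transition leads to a live state, as required.

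The main obstacle, strictly speaking, is just ensuring that consistency of $m(T)$ with $T$ is available from \autoref{consistent}; once that is in hand, both arguments are one-line transports across $\llbracket \row(s) \rrbracket(e) = \row(s)(e)$. A minor bookkeeping point is the implicit coercion between $\row(s) \in 2^E$ and the subset $\{e \in E \mid \row(s)(e) = 1\} \subseteq \GS$ used in the transition function; this should be made explicit to justify that "non-empty row" and "non-empty language" coincide on $E$, which is exactly what makes the normality argument go through.
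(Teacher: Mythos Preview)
Your proposal is correct and matches the paper's approach essentially line for line: the paper likewise derives both properties from the consistency identity $\llbracket \row(s) \rrbracket(e) = \row(s)(e)$ established in \autoref{consistent}, using non-emptiness of the target row for normality and disjointness of rows for observability. The only cosmetic difference is that the paper splits the two claims into separate auxiliary results (\autoref{normality} and \autoref{injectivity}) before combining them.
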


\subsection{Relationship between $m(T)$ and $m(\mathscr{X})$}

We will next deduce the correctness of $\GLStar$, that is, its termination with an automaton isomorphic to $m(\mathscr{X})$, if the teacher is instantiated with the language accepted by a finite normal automaton $\mathscr{X}$. 

In a first step we establish that any hypothesis admits an injective function from its state-space into the state-space of $m(\mathscr{X})$. The result below does not necessarily require the observation table to be deterministic or closed.

\begin{restatable}{lemma}{injectivefunction}
\label{injectivefunction}
	Let $T = (S, E, \row)$ be an observation table with $\row(t)(e) = \llbracket \mathscr{X} \rrbracket (te)$ for all $t \in S \cup S \cdot (\At \cdot \Sigma),\ e \in E$, and let $x \in X$ be the initial state of $\mathscr{X}$. Then $\pi: \lbrace \row(s) \mid s \in S \rbrace \rightarrow \lbrace w^{-1}\llbracket \mathscr{X} \rrbracket \mid w \in R(x) \rbrace,\ \row(s) \mapsto s^{-1}\llbracket \mathscr{X} \rrbracket$ is a well-defined injective function.
\end{restatable}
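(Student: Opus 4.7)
The plan is to split the argument into two parts: verifying that $\pi$ is well-defined as a function, and then establishing its injectivity. Well-definedness itself has two aspects, namely that the assignment $\row(s) \mapsto s^{-1}\llbracket \mathscr{X} \rrbracket$ is unambiguous and that its values lie in the specified codomain $\lbrace w^{-1}\llbracket \mathscr{X} \rrbracket \mid w \in R(x) \rbrace$. Notably, the table $T$ need not be closed for any of this, since the statement concerns the row map and not the transition structure of a hypothesis.

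For unambiguity, I would invoke the fourth clause of \autoref{observationtable}, which asserts that $\row$ is injective on $S$. So if $\row(s) = \row(s')$ for $s, s' \in S$, then $s = s'$, and the derivatives coincide trivially. For the codomain claim, I need to show $s \in R(x)$ for every $s \in S$. The case $s = \varepsilon$ is immediate from the base rule of \eqref{transitiondef}. For $s \neq \varepsilon$, the fifth clause of \autoref{observationtable} provides some $e \in E$ with $\row(s)(e) = 1$, which by hypothesis means $se \in \llbracket \mathscr{X} \rrbracket$. Writing $s = \alpha_1 p_1 \cdots \alpha_n p_n$ with $n \geq 1$, I would unfold the inductive clause of \eqref{gkatautomatasemantics} on the length of $s$ to extract a sequence of states $x = x_0, x_1, \ldots, x_n$ satisfying $\delta(x_{i-1})(\alpha_i) = (p_i, x_i)$ for $1 \leq i \leq n$; applying the third rule of \eqref{transitiondef} then chains these single-step transitions into $x \overset{s}{\rightarrow} x_n$, witnessing $s \in R(x)$.

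Injectivity reduces to a short chain of equivalences. Assuming $\pi(\row(s)) = \pi(\row(s'))$, i.e., $s^{-1}\llbracket \mathscr{X} \rrbracket = (s')^{-1}\llbracket \mathscr{X} \rrbracket$, I would check, for arbitrary $e \in E$, that $\row(s)(e) = 1$ iff $se \in \llbracket \mathscr{X} \rrbracket$ iff $e \in s^{-1}\llbracket \mathscr{X} \rrbracket$ iff $e \in (s')^{-1}\llbracket \mathscr{X} \rrbracket$ iff $s'e \in \llbracket \mathscr{X} \rrbracket$ iff $\row(s')(e) = 1$, using only the hypothesis $\row(t)(e) = \llbracket \mathscr{X} \rrbracket(te)$ and the definition of the left derivative. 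Hence $\row(s) = \row(s')$, and $\pi$ is injective.

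The only step that is not pure unfolding of definitions is the semantic-to-syntactic passage that yields $s \in R(x)$; this is a routine induction on the length of $s$ against the recursive shape of \eqref{gkatautomatasemantics}, and is the sole place where the automaton $\mathscr{X}$ itself (rather than just the language it accepts) enters the proof.
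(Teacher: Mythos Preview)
Your proposal is correct and follows essentially the same approach as the paper: both split into well-definedness (unambiguity via disjointness of rows, codomain via $S \subseteq R(x)$ using the fifth clause of \autoref{observationtable} and the semantics \eqref{gkatautomatasemantics}) and injectivity via the chain of equivalences through the derivative. You are slightly more explicit than the paper in spelling out the induction that extracts the transition sequence witnessing $s \in R(x)$, whereas the paper just says this follows ``by the definition of $\llbracket - \rrbracket$'', but the argument is the same.
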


If the algorithm terminates with a hypothesis $m(T)$, the latter is, by definition, language equivalent to $\mathscr{X}$, and thus to the minimization $m(\mathscr{X})$, by \autoref{minimallanguage}. The next result implies a stronger statement: in case of termination, the hypothesis $m(T)$ is \emph{isomorphic} to $m(\mathscr{X})$, via the function $\pi$ of \autoref{injectivefunction}.

\begin{restatable}{proposition}{isolanguagequiv}
\label{isolanguagequiv}
	Let $T = (S, E, \row)$ be a closed deterministic observation table with $\row(t)(e) = \llbracket \mathscr{X} \rrbracket(te)$ for all $t \in S \cup S \cdot (\At \cdot \Sigma),\ e \in E$. Let $\pi$ be the injection of \autoref{injectivefunction}, and $\mathscr{X}$ normal. The following are equivalent:
\begin{enumerate}
	\item $\pi: m(T) \simeq m(\mathscr{X})$ is a $G$-automata isomorphism;
	\item $\llbracket m(T) \rrbracket =  \llbracket m(\mathscr{X}) \rrbracket$.
\end{enumerate}
\end{restatable}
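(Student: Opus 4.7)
The plan is to address the two implications separately. The direction $(1) \Rightarrow (2)$ is immediate, since any $G$-automata isomorphism preserves accepted languages. The interesting direction is $(2) \Rightarrow (1)$, and my strategy there is to recognize $\pi$ as the semantic map of $m(T)$, which is automatically a $G$-coalgebra homomorphism, and then to read off bijectivity from structural properties already proven earlier in the paper.

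First, by \autoref{normalityandobservable}, $m(T)$ is normal, so the function $\llbracket - \rrbracket_{m(T)} : m(T) \to \mathcal{L}$ is the unique $G$-coalgebra homomorphism into the final coalgebra of deterministic languages. For any $s \in S$, \autoref{reachability} yields a path $\row(\varepsilon) \overset{s}{\to} \row(s)$ in $m(T)$, so, applying the semantic map, $\llbracket \row(s) \rrbracket_{m(T)} = s^{-1}\llbracket \row(\varepsilon) \rrbracket_{m(T)} = s^{-1}\llbracket m(T) \rrbracket$. Invoking hypothesis $(2)$ together with \autoref{minimallanguage} gives $\llbracket m(T) \rrbracket = \llbracket m(\mathscr{X}) \rrbracket = \llbracket \mathscr{X} \rrbracket$, whence $\llbracket \row(s) \rrbracket_{m(T)} = s^{-1}\llbracket \mathscr{X} \rrbracket = \pi(\row(s))$. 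Thus $\pi$ coincides with the semantic map of $m(T)$ and is, in particular, a $G$-coalgebra homomorphism; it preserves initial states, since $\pi(\row(\varepsilon)) = \varepsilon^{-1}\llbracket \mathscr{X} \rrbracket = \llbracket \mathscr{X} \rrbracket$, the initial state of $m(\mathscr{X})$.

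It remains to verify that $\pi$ is a bijection onto the state-space of $m(\mathscr{X})$. That $\pi$ takes values in that state-space and is injective follows directly from \autoref{injectivefunction}. For surjectivity I would argue that the image of $\pi$, being the homomorphic image of a $G$-coalgebra, is a $G$-subcoalgebra of $\mathcal{L}$ containing $\llbracket \mathscr{X} \rrbracket$; since $m(\mathscr{X})$ is itself reachable from $\llbracket \mathscr{X} \rrbracket$ by \autoref{reachableminimal}, every state of $m(\mathscr{X})$ must appear in this image. Combining, $\pi : m(T) \to m(\mathscr{X})$ is a bijective $G$-automaton homomorphism, hence an isomorphism.

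The main obstacle is the identification $\pi = \llbracket - \rrbracket_{m(T)}$, since it is this step that transfers the proof burden from a direct case analysis on the piecewise-defined transition map of $m(T)$ to the abstract universal property of the semantic map of a normal coalgebra. Once this identification is in place, the remaining reasoning is a routine combination of the reachability and minimality results for $m(T)$ and $m(\mathscr{X})$ already established in the paper.
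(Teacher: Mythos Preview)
Your proof is correct and close in spirit to the paper's. Both directions agree in substance: $(1)\Rightarrow(2)$ is language preservation under isomorphism, and for $(2)\Rightarrow(1)$ both arguments recognise $\pi$ as a canonical homomorphism and then combine surjectivity with the injectivity already supplied by \autoref{injectivefunction}.

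The packaging differs slightly. The paper invokes \autoref{uniquehomminimal}: since $m(T)$ is normal and reachable with $\llbracket m(T)\rrbracket = \llbracket \mathscr{X}\rrbracket$, that proposition yields a unique surjective $G$-automata homomorphism $f\colon m(T)=r(m(T)) \to m(\mathscr{X})$, and one then checks $f=\pi$ using \autoref{reachability}. You instead identify $\pi$ directly with the final map $\llbracket-\rrbracket_{m(T)}$ (corestricted along $m(\mathscr{X})\hookrightarrow\mathcal{L}$) and recover surjectivity from the observation that the image of a coalgebra homomorphism is a subcoalgebra containing the initial state of the reachable automaton $m(\mathscr{X})$ (\autoref{reachableminimal}). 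Your route is marginally more self-contained, as it does not appeal to \autoref{uniquehomminimal}; the paper's route is more modular, reusing a universal property already established. Both ultimately rest on the same computation $\pi(\row(s)) = s^{-1}\llbracket\mathscr{X}\rrbracket = \llbracket\row(s)\rrbracket_{m(T)}$.
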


The main argument in the proof of \autoref{correctnesstheorem} is  the result below. It shows that if the oracle replies \emph{no} to an equivalence query and provides us with a counterexample $z$, then the table extended with the suffixes of $z$ can immediately be closed only if it is the first time such a situation occurs. 

\begin{restatable}{proposition}{ifclosedthen}
\label{ifclosedthen}
 	Let $T = (S, E, \row)$ be a closed deterministic observation table with $\row(t)(e) = \llbracket \mathscr{X} \rrbracket(te)$ for all $t \in S \cup S \cdot (\At \cdot \Sigma),\ e \in E$. Let $\llbracket m(T) \rrbracket(z) \not = \llbracket \mathscr{X} \rrbracket(z)$ for some $z \in \GS$, and $T' = (S, E \cup \suff(z), \row')$ with $\row'(t)(e) = \llbracket \mathscr{X} \rrbracket(te)$. If $T'$ is closed, then $\row'(\varepsilon)(e) = 0$ for all $e \in E$, but $\row'(\varepsilon)(z') = 1$ for some $z' \in \suff(z)$. 
 	 \end{restatable}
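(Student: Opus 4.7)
The plan is a proof by contradiction, driven by the natural row-bijection. I would define $\phi: m(T) \to m(T')$ by $\phi(\row(s)) = \row'(s)$ and note that this is a well-defined bijection between state sets that maps initial state to initial state, since rows of $T$ are pairwise distinct (\autoref{observationtable}) and $\row'$ refines $\row$ on an extended column set. The strategy is then to show that, unless the claimed conclusion holds, $\phi$ is forced to be a full $G$-automata isomorphism, which contradicts the counterexample property of $z$.

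The next step is a case analysis on transitions. If $\row(s \alpha p) \neq 0$ in $T$ (where such $p$ is unique by determinism of $\row(s)$), closedness of $T$ supplies $s' \in S$ with $\row(s') = \row(s \alpha p)$, and closedness of $T'$ supplies $s'' \in S$ with $\row'(s'') = \row'(s \alpha p)$; restricting $\row'$ to the old columns in $E$ forces $\row(s'') = \row(s \alpha p) = \row(s')$, and distinctness of rows in $T$ yields $s'' = s'$, so $\row'(s') = \row'(s \alpha p)$ and the transition is preserved by $\phi$. If $\row(s \alpha p) = 0$ for every $p$ and the same holds in $T'$, both transition maps fall back to $\row(s)(\alpha) = \row'(s)(\alpha)$, equal because $\At \subseteq E$. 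The only way $\phi$ can fail to preserve a transition is therefore a triple $(s, \alpha, p)$ with $\row(s \alpha p) = 0$ yet $\row'(s \alpha p) \neq 0$.

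To conclude, I would force such a triple to exist: if none did, $\phi$ would be an automata isomorphism, so $\llbracket m(T) \rrbracket = \llbracket m(T') \rrbracket$; by the consistency of $m(T')$ with $T'$ (implicit in the setup of \autoref{consistent}), $\llbracket m(T') \rrbracket(z) = \row'(\varepsilon)(z) = \llbracket \mathscr{X} \rrbracket(z)$, contradicting the hypothesis. Hence some $(s, \alpha, p)$ satisfies $\row(s \alpha p) = 0$ and $\row'(s \alpha p) \neq 0$. Closedness of $T'$ then yields $s'' \in S$ with $\row'(s'') = \row'(s \alpha p)$; the restriction of $\row'$ to $E$ gives $\row(s'') = 0$, and \autoref{observationtable} forces $s'' = \varepsilon$, since only the initial index may carry a zero row. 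Thus $\row(\varepsilon) = 0$, which yields $\row'(\varepsilon)(e) = 0$ for every $e \in E$, while $\row'(\varepsilon) = \row'(s \alpha p) \neq 0$ guarantees $\row'(\varepsilon)(z') = 1$ for some $z' \in \suff(z)$.

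The hard part will be the case analysis in the second paragraph, which relies on coordinating the closedness of both $T$ and $T'$ with the distinctness of upper rows to rigidly pin down the witness $s'' = s'$. A secondary subtlety is the appeal to consistency of $m(T')$ with $T'$, which is what converts the potential isomorphism of $\phi$ into a concrete contradiction with the failure of $m(T)$ on $z$.
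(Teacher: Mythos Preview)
Your argument is correct. The key ideas---find a point where $m(T')$ has a transition that $m(T)$ lacks, then use closedness of $T'$ together with the zero-row property of \autoref{observationtable} to force the target to be $\row'(\varepsilon)$---are exactly those of the paper's proof. The organisation, however, is genuinely different.

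The paper first splits on the sign of the counterexample. When $\llbracket \mathscr{X} \rrbracket(z)=0$ but $\llbracket m(T)\rrbracket(z)=1$, it invokes the separate language-inclusion lemma $\llbracket m(T)\rrbracket \subseteq \llbracket m(T')\rrbracket$ (\autoref{languageinclusion}) and consistency of $m(T')$ to reach a contradiction outright. When $\llbracket \mathscr{X}\rrbracket(z)=1$ but $\llbracket m(T)\rrbracket(z)=0$, it traces the accepting run of $z$ in $m(T')$ and uses \autoref{mT'impliesT} to locate the first step where $m(T)$ drops to $0$ while $m(T')$ transitions; this localises the witness $z'$ as a specific suffix of $z$. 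Your approach avoids the case split entirely: you package the transition comparison as ``$\phi$ is a bijection that is a homomorphism unless a new transition appears,'' and then let consistency of $m(T')$ on the single column $z$ do the work of both of the paper's cases at once. In effect, your second paragraph re-derives \autoref{mT'impliesT} inline, and your contradiction step subsumes \autoref{languageinclusion}. What you gain is a cleaner, symmetric argument; what the paper's version gains is an explicit accepting run that pins down $z'$ along the actual trace of $z$, which is what is later refined into the optimized-counterexample \autoref{optimizedcounterexample}.
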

 
 In consequence, an infinite chain of negative equivalence queries and immediately closed extended tables is impossible. Since fixing a closedness defect increases the size of $m(T)$, which by \autoref{injectivefunction} is bounded by the finite number of states in $m(\mathscr{X})$, we can deduce the correctness of \autoref{GlStaralgorithm}.

\begin{restatable}{theorem}{correctnesstheorem}
\label{correctnesstheorem}
	If $\autoref{GlStaralgorithm}$ is instantiated with the language accepted by a finite normal automaton $\mathscr{X}$, then it terminates with a hypothesis isomorphic to $m(\mathscr{X})$.
\end{restatable}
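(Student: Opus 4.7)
The plan is to separately establish (i) termination of the algorithm, and (ii) that the automaton returned upon termination is isomorphic to $m(\mathscr{X})$. For (ii), once the outer loop exits, the teacher has answered \emph{yes} to the last equivalence query, so $\llbracket m(T) \rrbracket = \llbracket \mathscr{X} \rrbracket$; by \autoref{minimallanguage} the right-hand side equals $\llbracket m(\mathscr{X}) \rrbracket$, and since $\mathscr{X}$ is normal, \autoref{isolanguagequiv} promotes this language equality into a $G$-automata isomorphism via the map $\pi$ from \autoref{injectivefunction}. So the work concentrates on proving termination.

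First I would argue termination of the inner \textbf{while}-loop. By \autoref{welldefinedeverystep} the table $T$ is a well-defined deterministic observation table throughout, and by \autoref{injectivefunction} the injection $\pi$ yields the bound $|\lbrace \row(s) \mid s \in S \rbrace| \leq |m(\mathscr{X})|$, which is finite since $\mathscr{X}$ is finite. Each closedness fix adds an element $t \in S \cdot \At \cdot \Sigma$ to $S$ whose row differs from every existing row in $S$, so each iteration strictly increases $|\lbrace \row(s) \mid s \in S \rbrace|$. The loop therefore exits after at most $|m(\mathscr{X})|$ iterations with a closed deterministic table, and \autoref{hypothesis} produces a hypothesis $m(T)$.

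Next I would argue termination of the outer \textbf{repeat}-loop. Each pass either terminates the algorithm or adds the suffixes of a counterexample $z$ to $E$. The key observation, and the main obstacle, is that adding suffixes need not by itself enlarge the state-space of the hypothesis: if the extended table $T'$ is already closed, the next hypothesis may have the same number of states, and a priori this could happen repeatedly. Here \autoref{ifclosedthen} is crucial: if $T'$ is immediately closed, then the initial-row entries on the old columns $E$ are all zero while some entry $\row'(\varepsilon)(z') = 1$ appears on a new column $z' \in \suff(z)$. Since $E$ only grows over time and $\row$ on a fixed index depends only on the language $\llbracket \mathscr{X} \rrbracket$ (not on the iteration), once the row $\row(\varepsilon)$ contains a $1$ in some column it will continue to do so forever. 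Thus the ``immediately closed after counterexample'' situation can occur at most once during the whole run.

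From this point on, every counterexample must trigger at least one genuine closedness fix, strictly increasing $|\lbrace \row(s) \mid s \in S \rbrace|$. Combined with the bound $|\lbrace \row(s) \mid s \in S \rbrace| \leq |m(\mathscr{X})| < \infty$ from \autoref{injectivefunction}, the outer loop can perform only finitely many iterations before the teacher must reply \emph{yes}. Hence the algorithm terminates, and by part (ii) above its output is isomorphic to $m(\mathscr{X})$, completing the proof.
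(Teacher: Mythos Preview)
Your proposal is correct and follows essentially the same route as the paper: bound the number of closedness fixes via the injection of \autoref{injectivefunction}, then use \autoref{ifclosedthen} to rule out an infinite run of ``immediately closed'' counterexample rounds, and finally invoke \autoref{isolanguagequiv} for the isomorphism at termination. Your write-up is in fact more explicit than the paper's, which compresses the key step to ``applying \autoref{ifclosedthen} twice'' and leaves the isomorphism conclusion implicit.
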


\section{Comparison with Moore automata}

\label{comparisonmoore}

How are the minimal GKAT automaton (\autoref{GLStarmT3}) and the minimal Moore automaton (\autoref{LStarmT4}) representing the language \eqref{acceptedlanguageexample} related? Why should we learn the former, and not the latter? Are there optimizations for $\LStar$ that we could adapt for $\GLStar$? Those are the questions this section seeks to answer. 
 	
\subsection{Embedding of GKAT automata}	

Comparing the GKAT automaton in \autoref{GLStarmT3} with the Moore automaton (with input $\At \cdot \Sigma$ and output $2^{\At}$, short $M$\emph{-automaton}) in \autoref{LStarmT4} suggests that the latter can be recovered from the former by adding a sink-state with which halting transitions can be made explicit. The result below formalises this idea. The language semantics of Moore automata is defined as usual.

\begin{restatable}{lemma}{embeddinglanguage}
\label{embeddinglanguage}
	Given a $G$-automaton $\mathscr{X} = (X, \delta, x)$, let $f(\mathscr{X}) := (X + \lbrace \star \rbrace, \langle \partial, \varepsilon \rangle, x)$ be the $M$-automaton with 
	\begin{gather*}
			\partial(x)(\alpha p) := \begin{cases}
		y & \textnormal{if } x \in X,\ \delta(x)(\alpha) = (p, y) \\
		\star & \textnormal{otherwise}
	\end{cases}
	\qquad 
	\varepsilon(x)(\alpha) := \begin{cases}
		1 & \textnormal{if } x \in X,\ \delta(x)(\alpha) = 1 \\
		0 & \textnormal{otherwise}
	\end{cases}.
		\end{gather*}
	Then $\llbracket x \rrbracket_{\mathscr{X}} = \llbracket x \rrbracket_{f(\mathscr{X})}$ for all $x \in X$, and $\llbracket \star \rrbracket_{f(\mathscr{X})} = \emptyset$. In particular, $\llbracket f(\mathscr{X}) \rrbracket_{f(\mathscr{X})} = \llbracket \mathscr{X} \rrbracket_{\mathscr{X}}$.
\end{restatable}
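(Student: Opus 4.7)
The plan is to prove both component claims by induction on the length of the guarded string (i.e. on the number of action symbols it contains), and then derive $\llbracket f(\mathscr{X}) \rrbracket_{f(\mathscr{X})} = \llbracket \mathscr{X} \rrbracket_{\mathscr{X}}$ as an immediate specialisation to the initial state $x$.

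First, I would handle the sink state by showing $\llbracket \star \rrbracket_{f(\mathscr{X})} = \emptyset$ by induction. In the base case $\alpha \in \At$, the definition of $\varepsilon$ forces $\varepsilon(\star)(\alpha) = 0$ because $\star \notin X$, so $\alpha \notin \llbracket \star \rrbracket_{f(\mathscr{X})}$. For the inductive step $\alpha p w$ with $w \in \GS$, one computes $\partial(\star)(\alpha p) = \star$ (again because $\star \notin X$), so acceptance of $\alpha p w$ at $\star$ reduces to acceptance of $w$ at $\star$, which fails by the induction hypothesis.

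Next, I would show $\llbracket x \rrbracket_{\mathscr{X}} = \llbracket x \rrbracket_{f(\mathscr{X})}$ for all $x \in X$ by induction, directly comparing \eqref{gkatautomatasemantics} with the Moore semantics recalled in \autoref{Mooreautomataacceptance}. In the base case $\alpha \in \At$, both sides are governed by $\delta(x)(\alpha) = 1$ via the definition of $\varepsilon(x)(\alpha)$. For the inductive step $\alpha p w$, I would split on the value of $\delta(x)(\alpha)$: if $\delta(x)(\alpha) = (p, y)$ for some $y \in X$, then $\partial(x)(\alpha p) = y$ by definition, so both sides coincide by the induction hypothesis applied to $y$. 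If instead $\delta(x)(\alpha)$ is either a Boolean or a pair $(p', y')$ with $p' \neq p$, then $\alpha p w \notin \llbracket x \rrbracket_{\mathscr{X}}$ by \eqref{gkatautomatasemantics}, while $\partial(x)(\alpha p) = \star$, so acceptance of $\alpha p w$ at $x$ in $f(\mathscr{X})$ would reduce to $w \in \llbracket \star \rrbracket_{f(\mathscr{X})}$, contradicting the first claim already established.

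The global statement $\llbracket f(\mathscr{X}) \rrbracket_{f(\mathscr{X})} = \llbracket \mathscr{X} \rrbracket_{\mathscr{X}}$ is then obtained by specialising to the shared initial state $x$. I do not anticipate a genuine obstacle here: the argument is a routine structural induction, and the only point requiring care is bookkeeping the correct branching on the three possible shapes of $\delta(x)(\alpha) \in 2 + \Sigma \times X$ so that the sink transition into $\star$ is invoked precisely in those situations where the GKAT semantics rejects outright.
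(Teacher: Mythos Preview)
Your proposal is correct and matches the paper's own proof almost exactly: both argue by induction on the length of the guarded string, handling the base case via the definition of $\varepsilon$ and the inductive step via the definition of $\partial$, with the sink state $\star$ absorbing all ``otherwise'' transitions. The only cosmetic difference is that the paper runs the two inductions (for $\star$ and for $x \in X$) simultaneously, whereas you establish $\llbracket \star \rrbracket_{f(\mathscr{X})} = \emptyset$ first and then invoke it as a lemma in the second induction; your decoupling is harmless since the $\star$-claim is self-contained, and your case split on $\delta(x)(\alpha)$ is in fact slightly more explicit than the paper's about the sub-case $\delta(x)(\alpha) = (p', y')$ with $p' \neq p$.
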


As one would hope for, above construction maps, up to isomorphism, the minimal GKAT automaton $m(\mathscr{X})$ to the minimal Moore automaton accepting the same language as $\mathscr{X}$.
  
\begin{restatable}{corollary}{minimalembeddingiso}
\label{minimalembeddingiso}
 	 	Let $\mathscr{X}$ be a normal $G$-automaton, then $f(m(\mathscr{X})) \cong m(f(\mathscr{X}))$ as $M$-automata.
 \end{restatable}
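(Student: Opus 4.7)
The plan is to invoke uniqueness (up to isomorphism) of the minimal Moore automaton accepting a prescribed generalised language. Since $m(f(\mathscr{X}))$ is by construction the minimal reachable and observable $M$-automaton accepting $\llbracket f(\mathscr{X}) \rrbracket$, it suffices to show that $f(m(\mathscr{X}))$ is itself a reachable and observable $M$-automaton accepting the same language; the desired isomorphism is then furnished by abstract uniqueness, without the need to exhibit it by hand.

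Language equivalence is a routine chain,
\[
\llbracket f(m(\mathscr{X})) \rrbracket = \llbracket m(\mathscr{X}) \rrbracket = \llbracket \mathscr{X} \rrbracket = \llbracket f(\mathscr{X}) \rrbracket,
\]
where the outer equalities are two applications of \autoref{embeddinglanguage} and the middle one uses \autoref{minimallanguage}. For reachability of the non-sink states, \autoref{reachableminimal} gives, for each state $L$ of $m(\mathscr{X})$, a witness $w_L$ with $\llbracket \mathscr{X} \rrbracket \overset{w_L}{\rightarrow} L$ in $m(\mathscr{X})$; since the embedding $f$ leaves every $(p, L')$-transition of $m(\mathscr{X})$ intact and only makes the halting cases explicit as transitions to $\star$, the very same $w_L$ still reaches $L$ in $f(m(\mathscr{X}))$. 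The sink $\star$ is reached by prolonging some $w_L$ by an $\alpha p$ for which $\delta(L)(\alpha)$ is not of the form $(p, L')$, and such an $\alpha p$ exists in all non-degenerate cases.

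Observability splits into two subclaims. Distinct non-sink states $L \neq L'$ are non-bisimilar in $m(\mathscr{X})$ by \autoref{minimalnormalobservable}, so $\llbracket L \rrbracket \neq \llbracket L' \rrbracket$ as guarded string languages; by \autoref{embeddinglanguage} these coincide with the $M$-languages accepted by $L$ and $L'$ in $f(m(\mathscr{X}))$, so the two states remain distinguishable there. For the sink, \autoref{embeddinglanguage} yields $\llbracket \star \rrbracket = \emptyset$, while the normality half of \autoref{minimalnormalobservable} ensures that every non-sink state of $m(\mathscr{X})$ is live, ruling out any bisimulation with $\star$. The main obstacle is precisely this last argument: without the hypothesis that $\mathscr{X}$ is normal — an assumption that propagates through minimisation via \autoref{minimalnormalobservable} — dead states would collapse into $\star$ under the Moore minimisation and the claimed isomorphism would break. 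Everything else is a bookkeeping exercise aligning the GKAT transition structure of $m(\mathscr{X})$ with its explication as a Moore transition on $X + \lbrace \star \rbrace$.
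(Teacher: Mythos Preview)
Your proposal is correct and follows essentially the same route as the paper: establish that $f(m(\mathscr{X}))$ accepts the right language (via \autoref{embeddinglanguage} and \autoref{minimallanguage}), is observable (via \autoref{minimalnormalobservable} for the non-sink states and normality of $m(\mathscr{X})$ to separate them from $\star$), and is reachable (via \autoref{reachableminimal} for the non-sink states), then appeal to uniqueness of the minimal Moore automaton. The only wrinkle is your hedge ``in all non-degenerate cases'' for reachability of $\star$: in fact no exception is needed, since if $\llbracket \mathscr{X} \rrbracket \neq \emptyset$ then following any accepted word leads to a state with a halting transition (hence a transition to $\star$ in $f(m(\mathscr{X}))$), and if $\llbracket \mathscr{X} \rrbracket = \emptyset$ then normality forces the initial state of $m(\mathscr{X})$ to reject on every atom, again yielding a transition to $\star$; the paper likewise glosses over this point.
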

 
\subsection{Complexity analysis} 

We now compare the worst-case complexities of $\LStar$ (\autoref{LStaralgorithm}) and $\GLStar$ (\autoref{GlStaralgorithm}) for learning automata representations of GKAT programs $e$. We are mainly interested in a bound to the number of membership queries to $\llbracket e \rrbracket$. The example runs in \autoref{lstarexamplerun} and \autoref{glstarexamplerun} seem to indicate that with respect to this aspect, $\GLStar$ performs better than $\LStar$. The result below confirms this intuition.

\begin{restatable}{proposition}{complexity}
\label{complexity}
 	\autoref{LStaralgorithm} requires at most $O(a * (\vert \At \vert * b))$ many membership queries to $\llbracket e \rrbracket$ for learning a $M$-automaton representation of $e$, whereas $\autoref{GlStaralgorithm} $ requires at most $O(a * (\vert \At \vert + b))$ many membership queries to $\llbracket e \rrbracket$ for learning a $G$-automaton representation of $e$, for some\footnote{Let $m$ be the maximum length of a counterexample and $n$ the size of the minimal Moore automaton accepting $\llbracket e \rrbracket$, then  $a = n * \vert \At \vert * \vert \Sigma \vert$ and $b = m * n$. As \autoref{comparisongraph} shows, $\GLStar$ can be more efficient than $\LStar$ even for small $\vert \At \vert$.} integers $a, b \in \mathbb{N}$. 
 \end{restatable}
 
 One can show that for all integers $x, y$ greater than $2$, the product $x * y$ is strictly greater than the sum $x + y$. Moreover, the difference between $x * y$ and $x + y$ increases with the sizes of $x$ and $y$. The advantage of $\GLStar$ over $\LStar$ for learning deterministic guarded string languages in terms of membership queries thus increases with the size of the set $\At$, which is exponential in the number of primitive tests, $\At \cong 2^{T}$. In applications to network verification, the number of tests, thus atoms, is typically quite large \cite{anderson2014netkat}. The difference between $\GLStar$ and $\LStar$ described in \autoref{complexity} is mainly due to a subtle play with the table indices, based on currying. It can be further increased by avoiding querying certain rows all together, taking into account the deterministic nature of the target language, as indicated in \autoref{glstarexamplerunsection}.
 
 \subsection{Optimized counterexamples}
 
 \label{optimizedcounterexamplesec}
 
 In this section we present an optimization of $\GLStar$ that is based on a subtle refinement of \autoref{ifclosedthen}. We show that, while \autoref{GlStaralgorithm} reacts to a negative equivalence query with counterexample $z \in \GS$ by adding columns for \emph{all} suffixes in $\suff(z)$, it is in fact enough to add columns for a smaller subset of suffixes $\suff(z') \subseteq \suff(z)$, for some $z' \in \suff(z)$ of minimal length. Our approach is inspired by the optimized counterexample handling method of Rivest and Schapire for $\LStar$ \cite{rivest1993inference}.

 \begin{restatable}{lemma}{optimizedcounterexample}
 \label{optimizedcounterexample}
 	Let $T = (S, E, \row)$ be a closed deterministic observation table with $\row(t)(e) = \llbracket \mathscr{X} \rrbracket(te)$ for all $t \in S \cup S \cdot (\At \cdot \Sigma),\ e \in E$. Let $\llbracket m(T) \rrbracket(z) \not = \llbracket \mathscr{X} \rrbracket(z)$ for some $z \in \GS$, and 
 	$z' := \textnormal{min}(A_z)$\footnote{
   $
A_z := \lbrace z' \in \suff(z) \mid z = v \alpha p z',\ \row(\varepsilon) \xrightarrow{v} \row(s_v),\ x \xrightarrow{s_v} x_{s_v},\ \llbracket \row(s_v) \rrbracket(\alpha p z') \not = \llbracket x_{s_v} \rrbracket(\alpha p z') \rbrace
$
}. If $T' = (S, E \cup \suff(z'), \row')$ with $\row'(t)(e) = \llbracket \mathscr{X} \rrbracket(te)$ is closed, then $\row'(\varepsilon)(e) = 0$ for all $e \in E$, but $\row'(\varepsilon)(z') = 1$.
 \end{restatable}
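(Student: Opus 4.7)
The plan is to closely mirror the proof of \autoref{ifclosedthen}, exploiting the minimality of $z' = \min A_z$ to pin down the witness suffix exactly rather than just identifying some element of $\suff(z)$. The conclusion $\row'(\varepsilon)(e) = 0$ for $e \in E$ is structurally identical to the one in \autoref{ifclosedthen}, but the second half $\row'(\varepsilon)(z') = 1$ must single out the specific $z'$.

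First I would verify that $A_z$ is nonempty, so that $z' = \min A_z$ is well-defined. Taking $v = \varepsilon$ and writing $z = \alpha p z_0$, the trivial traces $\row(\varepsilon) \overset{\varepsilon}{\rightarrow} \row(\varepsilon)$ in $m(T)$ and $x \overset{\varepsilon}{\rightarrow} x$ in $\mathscr{X}$, combined with the hypothesis $\llbracket m(T) \rrbracket(z) \neq \llbracket \mathscr{X} \rrbracket(z)$, yield $z_0 \in A_z$; the edge case $z = \alpha$ (a single atom) can be dispatched by unfolding the definition of $\delta$ in $m(T)$.

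Next, assume $T'$ is closed and decompose $z = v \alpha p z'$. By definition of $A_z$, the trace $\row(\varepsilon) \overset{v}{\rightarrow} \row(s_v)$ exists in $m(T)$, $x \overset{s_v}{\rightarrow} x_{s_v}$ in $\mathscr{X}$, and $\llbracket \row(s_v) \rrbracket(\alpha p z') \neq \llbracket x_{s_v} \rrbracket(\alpha p z')$. The invariant $\row'(s_v \alpha p)(z') = \llbracket \mathscr{X} \rrbracket(s_v \alpha p z') = \llbracket x_{s_v} \rrbracket(\alpha p z')$ provides the needed leverage: the new column $z'$ records the true acceptance from $x_{s_v}$ after reading $\alpha p$. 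I would then adapt the backward-propagation argument of \autoref{ifclosedthen} with $z'$ playing the role of $z$, using closedness of $T'$ to identify $\row'(s_v \alpha p)$ with some $\row'(s)$ for $s \in S$, and iteratively unwind along $v$. The minimality of $z'$ is the decisive ingredient: at any later split $z = v' \alpha' p' z''$ with $z'' \subsetneq z'$, the states reached by $m(T)$ and $\mathscr{X}$ must agree on the remaining suffix (else $z''$ would itself lie in $A_z$, contradicting minimality), which prevents the closedness identification from landing on a row strictly deeper than $\row'(s_v)$ along the $v \alpha p$ path. The iteration therefore collapses back to $\row'(\varepsilon)$, yielding $\row'(\varepsilon)(z') = \llbracket \mathscr{X} \rrbracket(z') = 1$ and $\row'(\varepsilon)(e) = 0$ for all $e \in E$.

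The main obstacle will be formalizing the backward-propagation step: combining closedness of $T'$ with the minimality of $z'$ to show that no intermediate row in $S$ can absorb the demand imposed by the new column $z'$, forcing a collapse all the way to $\row'(\varepsilon)$. Rivest--Schapire's optimization for Moore automata performs the analogous step via a length-based invariant on the counterexample; here the argument must additionally respect the three-way GKAT transition structure (halt-accept, halt-reject, or emit-and-transition), which introduces extra case splits but follows the same overall recipe.
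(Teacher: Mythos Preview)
Your plan has the right ingredients (nonemptiness of $A_z$, the decomposition $z = v\alpha p z'$, the invariant $\row'(s_v\alpha p)(z') = \llbracket x_{s_v}\rrbracket(\alpha p z')$, and the appeal to minimality), but the mechanism you describe as ``backward-propagation'' and ``iteratively unwind along $v$'' is not what actually closes the argument, and following it literally would leave a gap.

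The paper's proof does \emph{not} trace back along $v$. It performs a single local case analysis at the one transition $\alpha p$ after $\row(s_v)$. First, the direction $\llbracket x_{s_v}\rrbracket(\alpha p z') = 0$, $\llbracket \row(s_v)\rrbracket(\alpha p z') = 1$ is eliminated outright using \autoref{languageinclusion} and \autoref{consistent}: since $\llbracket m(T)\rrbracket \subseteq \llbracket m(T')\rrbracket$ and $m(T')$ is consistent with $T'$, one derives $\llbracket x_{s_v}\rrbracket(\alpha p z') = 1$, a contradiction. You do not mention this case; it is independent of the closedness of $T'$ and must be handled separately. In the remaining direction, closedness of $T'$ gives $\row'(s_v\alpha p) = \row'(s_{v\alpha p})$ for some $s_{v\alpha p}\in S$, and then \autoref{mT'impliesT} forces either the same transition in $m(T)$ (which, combined with minimality of $z'$ or with $z'\in\At\subseteq E$, yields a contradiction) or $\delta(\row(s_v))(\alpha)=0$. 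In the latter case one computes $\row(s_{v\alpha p})(e)=0$ for all $e\in E$, and the table axiom ``$\varepsilon\neq s\in S$ implies $\row(s)\neq\emptyset$'' from \autoref{observationtable} forces $s_{v\alpha p}=\varepsilon$ \emph{immediately}. That is the whole argument: no iteration, no collapse along $v$, just a single identification $s_{v\alpha p}=\varepsilon$ coming from the all-zero-row axiom. Your proposal never invokes this axiom, and the vague ``iteration collapses back to $\row'(\varepsilon)$'' hides precisely the step that does the work.
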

  Let $z_0$ be the shortest suffix of $z$ and $z_i$ the suffix of $z$ of length $\vert z_{i-1}\vert + 1$. The suffix $\textnormal{min}(A_z)$ can easily be computed in at most $\vert \suff(z) \vert - 1$ steps: verify whether $z_i \in A_z$, beginning with $z_0$; if positive, break and set $\textnormal{min}(A_z) := z_i$, otherwise loop with $z_{i+1}$.
  
 For example, if $T$ is the closed table in \autoref{GLStarT2} with the corresponding hypothesis $m(T)$ in \autoref{GLStarmT2} and counterexample $z = b p \overline{b} q b$, then $z' = \textnormal{min}(A_z) = \overline{b}qb$, since $b \not \in A_z$. \autoref{optimizedcounterexample} shows that, instead of adding columns for the two non-present suffixes $b p \overline{b} q b$ and  $\overline{b}qb$ of $z$, it is sufficient to add only one column for the single non-present suffix  $\overline{b}qb$ of $z'$. In this case, the counterexample $z$ is relatively short, thus the number of avoided columns small; in general, however, the advantage can be more significant.

 \section{Implementation}

 \begin{figure}
 \centering	
\begin{subfigure}[c]{.48\textwidth}
\centering
\begin{subfigure}[c]{.48\textwidth}
 \adjustbox{scale=0.55,center}{
	 	\begin{tikzpicture}
\begin{axis}[
	width = 22em,
    xlabel={$\vert T \vert = \vert \lbrace t_1,...,t_n \rbrace \vert$},
    ylabel={Membership queries to $\llbracket e \rrbracket$},
    xmin=1, xmax=9,
    ymin=1, ymax=8000000,
    xtick={1,2,3,4,5,6,7,8,9},
    ytick={500000,1000000,2500000,5000000,8000000},
    legend pos=north west,
]
 \addplot[
    color=red,
    mark=x,
    ]
    coordinates {
    (1,114)
    (2,444)
    (3,1752)
    (4,6960)
    (5,27744)
    (6,110784)
    (7,442752)
    (8,1770240)
    (9,7079424)
    };
\addplot[
    color=blue,
    mark=square,
    ]
    coordinates {
    (1,26)
    (2,100)
    (3,392)
    (4,1552)
    (5,6176)
    (6,24640)
    (7,98432)
    (8,393472)
    (9,1573376)
    };
    \legend{$\LStar$,$\GLStar$}   
\end{axis}
\end{tikzpicture}
}
\end{subfigure}
\begin{subfigure}[c]{.48\textwidth}
 \adjustbox{scale=0.55,center}{
\begin{tabular}{ c | c | c }
\specialrule{.1em}{.15em}{.15em} 
 $\vert T \vert$ & $\GLStar$ & $\LStar$ \\
 \specialrule{.05em}{.15em}{.15em}
1 & 26 & 114 \\
2 & 100 & 444 \\ 
3 & 392 & 1.752 \\
4 & 1.552 & 6.960 \\
5 & 6.176 & 27.744 \\
6 & 24.640 & 110.784 \\
7 & 98.432 & 442.752 \\
8 & 393.472 & 1.770.240 \\
9 & 1.573.376 & 7.079.424	\\
 \specialrule{.1em}{.15em}{.15em} 
\end{tabular}
}
\end{subfigure}
\caption{$e = \iif\ t_1\ \tthen\ \ddo\ p_1\ \eelse\ \ddo\ p_2$}
\label{performance_ife}
\end{subfigure}
\begin{subfigure}[c]{.48\textwidth}
\centering
\begin{subfigure}[c]{.48\textwidth}
 \adjustbox{scale=0.55,center}{
	 	\begin{tikzpicture}
\begin{axis}[
	width = 22em,
    xlabel={$\vert T \vert = \vert \lbrace t_1,...,t_n \rbrace \vert$},
    ylabel={Membership queries to $\llbracket e \rrbracket$},
    xmin=1, xmax=9,
    ymin=1, ymax=8000000,
    xtick={1,2,3,4,5,6,7,8,9},
    ytick={500000,1000000,2500000,5000000, 8000000},
    legend pos=north west,
]
 \addplot[
    color=red,
    mark=x,
    ]
    coordinates {
    (1,78)
    (2,300)
    (3,1176)
    (4,4659)
    (5,18528)
    (6,73920)
    (7,295296)
    (8,1180416)
    (9,4720128)
    };
    
\addplot[
    color=blue,
    mark=square,
    ]
    coordinates {
    (1,36)
    (2,102)
    (3,330)
    (4,1170)
    (5,4386)
    (6,16962)
    (7,66690)
    (8,264450)
    (9,1053186)
    };
    \legend{$\LStar$,$\GLStar$}   
\end{axis}
\end{tikzpicture}
}
\end{subfigure}
\begin{subfigure}[c]{.48\textwidth}
 \adjustbox{scale=0.55,center}{
\begin{tabular}{ c | c | c }
\specialrule{.1em}{.15em}{.15em} 
 $\vert T \vert$ & $\GLStar$ & $\LStar$ \\
 \specialrule{.05em}{.15em}{.15em}
1 & 36 & 78 \\
2 & 102 & 300 \\ 
3 & 330 & 1.176 \\
4 & 1.170 & 4.656 \\
5 & 4.386 & 18.528 \\
6 & 16.962 & 73.920 \\
7 & 66.690 & 295.296 \\
8 & 264.450 & 1.180.416 \\
9 & 1.053.186 & 4.720.128	\\
 \specialrule{.1em}{.15em}{.15em} 
\end{tabular}
}
\end{subfigure}
\caption{$e =  (\wwhile\ t_1\ \ddo\ p_1); \ddo\ p_2$}
\label{performance_while}
\end{subfigure}
\caption{A comparison between $\GLStar$ and $\LStar$ with respect to membership queries.}
\label{comparisongraph}
 \end{figure}

We have implemented both $\GLStar$ and $\LStar$ in OCaml; the code is available on GitHub\footnote{\url{https://github.com/zetzschest/gkat-automata-learning}}. The implementation allows one to compare, for any GKAT expression $e \in \Exp_{\Sigma, T}$, the number of membership queries to $\llbracket e \rrbracket$ required by $\GLStar$ for learning a $G$-automaton representation of $e$, with the number of membership queries to $\llbracket e \rrbracket$ required by $\LStar$ for learning a $M$-automaton representation of $e$. For each run, we output, for both algorithms, a trace of the involved hypotheses as tables in the \texttt{.csv} format and graphs in the \texttt{.dot} format, as well as an overview of the numbers of involved queries in the \texttt{.csv} format. 

In \autoref{performance_ife} we present the results for the expression $e =  \iif\ t_1\ \tthen\ \ddo\ p_1\ \eelse\ \ddo\ p_2$, the primitive actions $\Sigma = \lbrace p_1, p_2, p_3 \rbrace$, and primitive tests $T = \lbrace t_1,...,t_n \rbrace$ parametric in $n = 1,...,9$. We find that $\GLStar$ outperforms $\LStar$ for all choices of $n$. The difference in the number of membership queries increases with the size of $n$, as suggested by \autoref{complexity}. For $n = 9$ the number of atoms is $2^9$, resulting in an already relatively large number of queries for both algorithms.
The picture is similar in \autoref{performance_while}, where we choose the expression $e =  (\wwhile\ t_1\ \ddo\ p_1); \ddo\ p_2$, the primitive actions $\Sigma = \lbrace p_1, p_2 \rbrace$, and primitive tests $T = \lbrace t_1,...,t_n \rbrace$ parametric in $n = 1,...,9$. Again, $\GLStar$ requires significantly less queries in all cases of $n$, and the difference increases with the size of $n$.

Our implementation generates an oracle for $\LStar$ from a GKAT expression $e$ in the following way. First, we interpret $e$ as a KAT expression $\iota(e)$ via the standard embedding of GKAT into KAT. Next, we generate from the latter a Moore automaton $\mathscr{X}_{\iota(e)}$ accepting $\llbracket e \rrbracket$, by using Kozen's syntactic Brzozowksi derivatives for KAT \cite{kozen2017coalgebraic}. Finally, we answer an equivalence query from a Moore automaton $\mathscr{Y}$ by running a bisimulation between $\mathscr{X}_{\iota(e)}$ and $\mathscr{Y}$, similarly to \cite[Fig. 1]{pous2015symbolic}, and a membership query from $w \alpha \in \GS$ by returning the value of $\alpha$ at the output of the state in $\mathscr{X}_{\iota(e)}$ reached by $w$, that is, $\llbracket e \rrbracket(w \alpha)$. A membership query from $w \in \GSM$ is answered by querying $w \alpha \in \GS$ for all $\alpha \in \At$.

With the oracle for $\LStar$, we can derive an oracle for $\GLStar$ as follows. Membership queries $w \alpha \in \GS$ are delegated and answered by the oracle for $\LStar$ as explained above. An equivalence query from a GKAT automaton $\mathscr{Y}$ is answered by posing an equivalence query to the oracle for $\LStar$ with the Moore automaton $f(\mathscr{Y})$ obtained via the embedding defined in \autoref{embeddinglanguage}. If the oracle for $\LStar$ replies with a counterexample $z \in \GSM$, we extend $z$ with an $\alpha \in \At$ such that $\llbracket  \mathscr{Y} \rrbracket(z\alpha) \not = \llbracket e \rrbracket(z\alpha)$.

\section{Related work}
\label{relatedwork}

GKAT is a variation on KAT \cite{kozen1996kleene} that one obtains by restricting the union and iteration operations from KAT to guarded versions. While GKAT is less expressive than KAT, term equivalence is notably more efficiently decidable \cite{smolka2019guarded,kozen1996kleene}, making it a candidate for the foundations of network-programming \cite{smolka2019scalable,anderson2014netkat,foster2015coalgebraic}

GKAT automata appear in the literature already prior to \cite{smolka2019guarded}, e.g. in the work of Kozen \cite{kozen2008bohm} under the name \emph{strictly deterministic automata}. In the latter, Kozen states that GKAT automata correspond to a limited class of \emph{automata with guarded strings (AGS)} \cite{kozen2001automata}, for which he gives determinization and minimization constructions. In a different paper \cite{kozen2017coalgebraic} Kozen introduces a second definition of (deterministic) AGS as Moore automata, and states the difference to the definition of AGS in \cite{kozen2001automata} is inessential. 

Recently, a new perspective on the semantics and coalgebraic theory of GKAT has been given in terms of coequations \cite{schmid2021guarded,dahlqvist2021write}. Using the Thompson construction, it is possible to construct for every expression $e$ a language equivalent automaton $\mathscr{X}_e$. In \cite{kozen2008bohm} it was shown that the inverse does generally not hold: there exists a GKAT automaton that is inequivalent to $\mathscr{X}_e$ for all expressions $e$. In consequence, \cite{smolka2019guarded} proposed a subclass of \emph{well-nested} automata and showed that every finite well-nested automaton is bisimilar to $\mathscr{X}_e$ for some $e$. In \cite{schmid2021guarded} it was shown that well-nestedness is in fact too restrictive: there exists an automaton that is bisimilar to $\mathscr{X}_e$ for some $e$, but not well-nested. 
To capture the \emph{full} class of automata exhibiting the behaviour of expressions, one has to extend the class of well-nested automata to the class of automata satisfying the \emph{nesting coequation}, which forms a \emph{covariety} \cite{dahlqvist2021write}. 

Active automata learning is a technique used for deriving a model from a black-box by interacting with it via observations. The seminal algorithm  $\LStar$\cite{angluin1987learning} learns deterministic finite automata, but since then has been extended to other classes of automata \cite{angluin1997learning,aarts2010learning,moerman2017learning}, including Moore automata. Typically, algorithms such as $\LStar$ are designed to output for a given language a unique minimal acceptor. Not all classes admit a canonical minimal acceptor, for instance, learning non-deterministic models is a challenge \cite{denis2001residual,bollig2009angluin,zetzsche2021,van2020learning}. 

\section{Discussion and future work}

We have presented $\GLStar$, an algorithm for learning the GKAT automaton representation of a black-box, by observing its behaviour via queries to an oracle. We have shown that for every normal GKAT automaton there exists a unique size-minimal normal automaton, accepting the same language: its minimization. We have identified the minimization with an alternative but equivalent construction, and derived its preservation of the nesting coequation. A central result showed that if the oracle in $\GLStar$ is instantiated with the language accepted by a finite normal automaton, then $\GLStar$ terminates with its minimization. A complexity analysis showed the advantage of $\GLStar$ over $\LStar$ for learning automata representations of GKAT programs in terms of membership queries. We discussed additional optimizations, and implemented $\GLStar$ and $\LStar$ in OCaml to compare their performances on example programs. 

There are numerous directions in which the present work could be further explored.
In \autoref{optimizedcounterexamplesec} we introduced an optimization for $\GLStar$ which is inspired by Rivest and Schapire's counterexample handling method for $\LStar$ \cite{rivest1993inference}. The \textit{oberservation pack} algorithm for $\LStar$ \cite{howar2012active} has successfully combined Rivest and Schapire's method with an efficient \textit{discrimination tree} data structure \cite{kearns1994introduction}. The state-of-the-art \textit{TTT}-algorithm \cite{isberner2014ttt} for $\LStar$ extends the former with discriminator finalization techniques. It thus is natural to ask whether for $\GLStar$ there exist similarly efficient data structures, potentially exploiting the deterministic nature of the languages accepted by GKAT automata.

While $\LStar$ has seen major improvements over the years and has inspired numerous variations for different types of transition systems, all approaches remain in common their focus on the \emph{equivalence} of observations. The recently presented $\Lsharp$ algorithm \cite{vaandrager2021new} takes a different perspective: it instead focuses on \emph{apartness}, a constructive form of inequality. $\Lsharp$ does not require data-structures such as observation tables or discrimination trees, instead operating directly on tree-shaped automata. It remains open whether a similar shift in perspective is feasible for $\GLStar$.

There exist various domain-specific extensions of KAT (e.g. KAT+B! \cite{grathwohl2014kat+}, NetKAT \cite{anderson2014netkat}, ProbNetKAT \cite{foster2016probabilistic}), and similar directions have been proposed for GKAT. In particular, it has been noted that GKAT is better fit for probabilistic domains than KAT, as it avoids mixing non-determinism with probabilities \cite{smolka2019scalable}. We expect that in the future, for such extensions of GKAT, there will be interest in developing the corresponding automata (learning) theories.

\bibliography{example.bib}
\bibliographystyle{entics}

\end{document}